\newcommand{\num}[1]{$#1$}
\newcommand{\subspace}{\mathcal S}
\newcommand{\projection}[1]{\boldsymbol P_{#1}}
\newcommand{\identity}{\boldsymbol I}
\newcommand{\nnFuncEncoderArg}[1]{\mathbf{h}_{#1}}
\newcommand{\nnFuncDecoderArg}[1]{\bar{\mathbf{h}}_{#1}}
\newcommand{\nlayersEncoder}{N_{\text{layers}}}
\newcommand{\nlayersDecoder}{\bar N_{\text{layers}}}
\newcommand{\weightNNEncoderarg}[1]{\boldsymbol{\Theta}_{#1}}
\newcommand{\weightNNDecoderarg}[1]{\bar{\boldsymbol{\Theta}}_{#1}}
\newcommand{\weightEncoder}{\boldsymbol{\theta}}
\newcommand{\weightEncoderOpt}{\weightEncoder^\star}
\newcommand{\weightDecoder}{\bar{\boldsymbol{\theta}}}
\newcommand{\weightDecoderOpt}{\weightDecoder^\star}
\newcommand{\weight}{\boldsymbol\theta}
\newcommand{\weightArg}[1]{\theta_{#1}}
\newcommand{\weightDummy}{\bar\weight}
\newcommand{\features}{\state}
\newcommand{\nfeatures}{\nstate}
\newcommand{\SVRfeaturespace}{\mathcal H}
\newcommand{\activationfunction}{h}
\newcommand{\nNeuronEncoder}[1]{p_{{#1}}}
\newcommand{\nNeuronDecoder}[1]{\bar p_{{#1}}}
\newcommand{\functionWithActivation}{g}
\newcommand{\regularization}{\Omega}
\newcommand{\stateSymb}{x}
\newcommand{\state}{\boldsymbol{\stateSymb}}
\newcommand{\stateArg}[1]{\state^{#1}}
\newcommand{\stateArgs}[2]{\stateSymb_{#1}^{#2}}
\newcommand{\stateApprox}{\tilde\state}
\newcommand{\stateApproxArg}[1]{\stateApprox^{#1}}
\newcommand{\nstate}{N_x}
\newcommand{\nstateArg}[1]{N_{x,#1}}
\newcommand{\velocitySymb}{f}
\newcommand{\velocity}{\boldsymbol{\velocitySymb}}
\newcommand{\velocityArg}[1]{\velocitySymb_{#1}}
\newcommand{\velocityApprox}{\tilde{\boldsymbol{f}}}
\newcommand{\velocityApproxArg}[1]{\tilde{f}_{#1}}
\newcommand{\regressionModel}{\tilde{f}}
\newcommand{\regressionModelArg}[1]{\tilde{f}^{#1}}
\newcommand{\nfeaturesSplit}{N_{\text{split}}}
\newcommand{\responseTrain}[1]{\bar y^{#1}}
\newcommand{\featuresTrain}[1]{{\bar{\boldsymbol x}^{#1}}}
\newcommand{\svrMapping}{\boldsymbol \psi}
\newcommand{\ntree}{{N_\text{tree}}}
\newcommand{\nearestSetArg}[1]{\mathcal{I}(#1)}
\newcommand{\knnWeight}{\tau}
\newcommand{\nSINDyterm}{{n_\text{SINDy}}}
\newcommand{\nVKOGA}{{n_\text{VKOGA}}}
\newcommand{\basisFunctionArg}[1]{g_{#1}}
\newcommand{\VKOGAbasis}{\boldsymbol{\alpha}}
\newcommand{\timeSymb}{t}
\newcommand{\card}[1]{|#1|}
\newcommand{\timeArgPCA}[1]{n_{\text{PCA},#1}}
\newcommand{\timeArgRegression}[1]{n_{\text{reg},#1}}
\newcommand{\ntimeTrainPCA}{{\card{\timeDomainTrainPCA}}}
\newcommand{\nPCA}{n_{\hat{\mathbf w}}}
\newcommand{\ntimeTrainRegression}{{\card{\timeDomainTrainRegression}}}
\newcommand{\ntrainingData}{{n_\mathrm{train}}}
\newcommand{\ntime}{{N_\timeSymb}}
\newcommand{\fullToRed}{\mathbf{r}}
\newcommand{\redToFull}{\mathbf{p}}
\newcommand{\encoderFunction}{\phi}
\newcommand{\leftSing}{\boldsymbol U}
\newcommand{\leftSingArg}[1]{\boldsymbol U_{#1}}
\newcommand{\leftSingVecArg}[2]{\boldsymbol u_{#1,#2}}
\newcommand{\leftSingVec}[1]{\boldsymbol u_{#1}}
\newcommand{\range}[1]{\text{Ran}(#1)}
\newcommand{\diag}[1]{\mathrm{diag}(#1)}
\newcommand{\Sing}{\boldsymbol \Sigma}
\newcommand{\SingArg}[1]{\boldsymbol \Sigma_{#1}}
\newcommand{\rightSing}{\boldsymbol V}
\newcommand{\rightSingArg}[1]{\boldsymbol V_{#1}}
\newcommand{\encoderFunctionGlobal}{\phi_\text{global}}
\newcommand{\decoderFunction}{\psi}
\newcommand{\decoderFunctionGlobal}{\psi_\text{global}}
\newcommand{\stateGlobalVector}{\boldsymbol w}
\newcommand{\stateGlobalVectorArg}[1]{\stateGlobalVector^{#1}}
\newcommand{\stateGlobalVectorApproxArg}[1]{\tilde{\boldsymbol w}^{#1}}
\newcommand{\nstateGlobalVector}{N_{\stateGlobalVector}}
\newcommand{\stateLocalVector}[1]{\boldsymbol w_{#1}}
\newcommand{\stateLocalVectorApprox}[1]{\tilde{\boldsymbol w}_{#1}}
\newcommand{\stateLocalVectorRed}[1]{\hat{\stateGlobalVector}_{#1}}
\newcommand{\stateGlobalVectorRed}{\hat{\stateGlobalVector}}
\newcommand{\stateGlobalVectorRedArg}[1]{\hat{\mathbf w}_{#1}}
\newcommand{\stateGlobalVectorRedAvg}{\bar{{\boldsymbol w}}}
\newcommand{\stateGlobalVectorRedAvgArg}[1]{\bar{{\mathbf w}}_{#1}}
\newcommand{\nstateLocalVector}{N_{{\stateGlobalVector},\text{el}}}
\newcommand{\nstateLocalVectorRed}{N_{\hat{\stateGlobalVector},\text{el}}}
\newcommand{\vectorize}[1]{\text{vec}(#1)}
\newcommand{\defeq}{:=}
\newcommand{\timeDomain}{\mathbb T}
\newcommand{\timeDomainGen}{\timeDomain}
\newcommand{\timeDomainSample}{\timeDomain_\text{sample}}
\newcommand{\timeDomainTrainAuto}{\timeDomain_\text{autoencoder}}
\newcommand{\timeDomainTrainPCA}{\timeDomain_\text{PCA}}
\newcommand{\timeDomainTrainRegression}{\timeDomain_\text{reg}}
\newcommand{\trainingDataArg}[1]{\mathcal T_\mathrm{train,#1}}
\newcommand{\trainingData}{\mathcal T_\mathrm{train}}
\newcommand{\PCA}{\boldsymbol \Phi}
\newcommand{\PCAArg}[1]{\boldsymbol \Phi_{#1}}
\newcommand{\nelements}{N_\text{el}}
\newcommand{\RR}[1]{\mathbb R^{#1}}
\newcommand{\innat}[1]{=1,\ldots,#1}
\newcommand{\zero}{{\mathbf{0}}}
\definecolor{colorGreen}{RGB}{0,130,0}
\newcommand{\R}{\mathbb{R}}
\providecommand{\e}[1]{\ensuremath{\times 10^{#1}}}
\newcommand{\Reynolds}{\mathrm{Re}}
\newcommand{\Prandtl}{\mathrm{Pr}}
\newcommand{\reviewer}[1]{#1}
\newcommand{\innerprod}[2]{\langle #1,#2\rangle}
\newcommand{\LSBoosting}{F}
\newcommand{\WeekLearner}{h}
\newcommand{\xspace}{X}
\newcommand{\idset}{\mathbb{I}}
\newcommand{\id}{i}
\newcommand{\maxcoeff}{\alpha}
\newcommand{\bignumber}{M}
\theoremstyle{definition}
\newtheorem{theorem}{Theorem}[section]
\newtheorem{corollary}{Corollary}[theorem]
\newtheorem{lemma}[theorem]{Lemma}
\begin{document}
\numberwithin{equation}{section}
\begin{frontmatter}
\title{Recovering missing CFD data for high-order discretizations using\\ deep
neural networks and dynamics learning}

	\author[sandia]{Kevin T.\ Carlberg\corref{sandiacor}\fnref{authors}}
\ead{ktcarlb@sandia.gov}
\ead[url]{sandia.gov/~ktcarlb}
\author[texas]{Antony Jameson\fnref{texascor}}
	\ead{antony.jameson@tamu.edu}
\ead[url]{engineering.tamu.edu/aerospace/profiles/jameson-antony.html}
	\author[stanford]{Mykel J.\  Kochenderfer\fnref{stanfordcor}}
\ead{mykel@stanford.edu}
\ead[url]{mykel.kochenderfer.com}
\author[stanford]{Jeremy Morton\fnref{stanfordcor}}
\ead{jmorton2@stanford.edu}
\author[sandia]{Liqian Peng\fnref{liqian}}
\ead{pengliqian@gmail.com}
\author[stanford]{Freddie D.\ Witherden\fnref{stanfordcor}}
\ead{fdw@stanford.edu}
\ead[url]{freddie.witherden.org}

\address[sandia]{Sandia National Laboratories}
\fntext[authors]{Authors are listed in alphabetical order.}
\cortext[sandiacor]{7011 East Ave, MS 9159, Livermore, CA 94550. Sandia National Laboratories is a multimission laboratory managed and operated
by National Technology \& Engineering Solutions of Sandia, LLC, a wholly owned
subsidiary of Honeywell International Inc., for the U.S. Department of
Energy's National Nuclear Security Administration under contract DE-NA0003525.}
\address[stanford]{Stanford University}
\fntext[stanfordcor]{Durand Building, 496 Lomita Mall, Stanford University, Stanford, CA 94305-3035.}
\address[texas]{Texas A\&M University}
\fntext[texascor]{701 H.R. Bright Bldg, College Station, TX 77843-3141.}

\begin{abstract}
Data I/O poses a significant bottleneck in large-scale CFD simulations; thus,
	practitioners would like to significantly reduce the number of times the
	solution is saved to disk, yet retain the ability to recover any field
	quantity (at any time instance) \textit{a posteriori}. 	The
	objective of this work is therefore to accurately recover missing CFD data
	\textit{a posteriori} at any time instance, given that the solution has been
	written to disk at only a relatively small number of time instances. We
	consider in particular high-order discretizations (e.g., discontinuous
	Galerkin), as such techniques are becoming increasingly popular for the
	simulation of highly separated flows.  To satisfy this objective, this work
	proposes a methodology consisting of two stages: 1) dimensionality reduction
	and 2) dynamics learning. For dimensionality reduction, we propose a novel
	hierarchical approach. First, the method reduces the number of degrees of
	freedom within each element of the high-order discretization by applying
	autoencoders from deep learning. Second, the methodology applies principal
	component analysis to compress the global vector of encodings. This leads to
	a low-dimensional state, which associates with a nonlinear embedding of the
	original CFD data. For dynamics learning, we propose to apply regression
	techniques (e.g., kernel methods) to learn the discrete-time velocity
	characterizing the time evolution of this low-dimensional state. A numerical
	example on a large-scale CFD example characterized by nearly 13 million
	degrees of freedom illustrates the suitability of the proposed method
	in an industrial setting.
\end{abstract}

\begin{keyword}
CFD \sep high-order schemes \sep deep learning \sep autoencoders \sep
	dynamics learning \sep machine learning

\end{keyword}
\end{frontmatter}

\section{Introduction}
Industrial practitioners of
Computational Fluid Dynamics (CFD) often desire high-fidelity scale-resolving
simulations of transient compressible flows within the vicinity of complex
geometries.  For example, to improve the design of next-generation aircraft,
one must simulate---at Reynolds numbers
in the range
$10^4$--$10^7$ and Mach numbers in the range $0.1$--$1.0$---highly separated flow
over deployed spoilers/air-brakes; separated flow within serpentine intake
ducts; and flow over entire vehicle configurations at off-design conditions.
In order to perform these simulations, it is necessary to solve the unsteady
compressible Navier--Stokes equations with a high level of fidelity.

Theoretical studies and numerical experiments have shown high-order accurate
discontinuous spectral element methods (DSEM) to be particularly well suited
for these types of simulations \cite{vincent2011facilitating,
vermeire2017utility, park2017high}.  Indeed, we remark that some of the
largest CFD simulations to date---involving hundreds of billions of degrees of
freedom---have been performed using these methods \cite{vincent2016towards}.
DSEMs work by \reviewer{first} decomposing the domain into a set of conforming elements and
then \reviewer{representing} the solution inside of each element by a polynomial.
However, in contrast to classical finite element methods, the solution is
permitted to be discontinuous across element boundaries.   Examples of DSEMs
include discontinuous Galerkin (DG) methods
\cite{reed1973triangular,cockburn1991runge,hesthaven2008nodal}, spectral
difference (SD) schemes
\cite{kopriva1998staggered,liu2006spectral,sun2007high}, and flux
reconstruction (FR) schemes \cite{huynh2007flux}.

On account of the large number of spatial degrees of freedom (arising from the
large number of discretization elements and high-order polynomials within each
element) and small time steps that are often required for accuracy, such
simulations have the potential to generate petabytes of solution data; this is
well beyond what can be handled by current I/O and storage
sub-systems.  \reviewer{Therefore,} it is not practical to write out a sequence of finely
spaced solution snapshots to disk for offline analysis and post-processing.
Instead, the practitioner must instrument the simulation in advance.  Such
instrumentation typically includes the online accumulation of time averages of
various volumetric expressions, integrating forces on boundaries, and the
regular sampling of the flow field at specific points in the volume.  However,
doing this effectively requires a degree of \emph{a priori} knowledge about
the dynamics and evolution of the system\reviewer{, which} negates many of the
exploratory advantages inherent to simulation.

A common strategy for reducing the storage overhead associated with CFD
simulations is to compress the data before writing to disk
\cite{kang2003study, sakai2013parallel, trott1996wavelets}.  Although these
approaches can substantially reduce file sizes, they do not necessarily reduce
the overall number of time instances at which a file must be written. In this
paper, we instead consider the problem of accurately recovering the full CFD solution at any time instance \textit{a posteriori}, given that
it has been written at only a relatively small number of time instances.
Several methods have been proposed to reconstruct missing simulation data,
especially in the context of CFD.
However, most of these methods either aim to reconstruct missing spatial
data (which is not the focus of this work)
\cite{bui2004aerodynamic,willcox2006unsteady} or are not amenable to scenarios
where no data has been written at some time instances
\cite{beckers2003eof,venturi2004gappy,gunes2006gappy}.
 
 To address this work's objective, we propose a novel two-stage strategy that
 leverages recent innovations in machine learning: 1) 
 dimensionality reduction and 2) dynamics learning.
\textit{The first stage is dimensionality reduction}. We propose a novel
hierarchical approach to dimensionality reduction that leverages the structure
of high-order discretizations. It comprises \textit{local compression} using
autoencoders from deep learning followed by \textit{global compression} using principal component
analysis (PCA). This stage computes an accurate, low-dimensional, nonlinear
embedding of the original data while remaining computationally tractable for
large-scale data.

For local compression, we reduce the large (e.g., $\sim 10^2$) number of degrees of freedom per element in the
mesh via autoencoders; this step leverages the structure of high-order
discretizations. An autoencoder is a (typically deep) neural network that performs
dimensionality reduction by learning a nonlinear mapping from
high-dimensional feature space to a lower-dimensional
encoding~\cite{hinton2006reducing}.  A standard autoencoder consists of two
components: an \textit{encoder} network that performs the mapping between input
data and the encodings and \reviewer{a} \textit{decoder} network that attempts to reconstruct the
original input from the encodings.  The autoencoder and its variants have
previously been used extensively for discovering low-dimensional
representations of image data~\cite{vincent2010stacked,kingma2013autoencoding,
theis2017lossy}.
However, rather than targeting image data, we use autoencoders for
local compression of CFD data.  Recent work has explored using autoencoder
architectures for discovering low-dimensional representations of fluid flow,
primarily for the purposes of simulation and flow
control~\cite{wiewel2018latent, otto2017linearly, lusch2017deep,puligilla2018deep,
morton2018deep}.  In these studies, \textit{entire solution states}
could be treated as neural network inputs (i.e., features) due to the
relatively small number of degrees of freedom in the systems being simulated.

Unfortunately, memory constraints prohibit training autoencoders on full
solution states for many large-scale CFD simulations.  Furthermore, due
to the large number of parameters that must be optimized, neural networks
typically require many training examples; this work assumes that the solution
\reviewer{has} been written at only a relatively small number of time instances.  The proposed method avoids these issues: the features
correspond to the degrees of freedom within a single element (and so pose no memory
issue), and the number of training examples is the number of time instances at
which the solution has been written \textit{multiplied by} the number of
elements in the mesh (which may be very large, e.g., ${\sim} 10^6$).

Global compression is necessary because even if autoencoders significantly reduce the number of
degrees of freedom in each element, the number of elements in the mesh
may still be large. We therefore apply classical principal
component analysis (with some modifications to handle large-scale data) to
reduce the dimensionality of the vector of local encodings across the entire
spatial domain. Although  PCA identifies a linear subspace, PCA in combination with
autoencoders yields a nonlinear embedding of the global CFD solution. In the
past, PCA has been widely applied to global data in the context of CFD, but
often under the name proper orthogonal decomposition (POD)
\cite{POD,carlbergGalDiscOpt}.  However, to our knowledge, PCA has not yet
been applied to a vector of autoencoder codes.

\textit{The second stage is dynamics learning}. Because this work assumes that
the solution has been written to disk at only a relatively small subset of the desired time
instances, we must devise an approach for reconstructing the solution at the
missing time instances. For this purpose, we propose learning the dynamics. 
In particular, we assume that there exists a Markovian
discrete-time dynamical system that describes the time-evolution of the
low-dimensional latent state (obtained after applying both local and global
compression) on the time grid of interest; of course, such a dynamical system
may not exist due to the closure problem. Next, we aim to approximate the
discrete-time velocity characterizing this hypothesized dynamical system. We
regress the (possibly nonlinear) mapping from the low-dimensional latent state at the current time
instance to the low-dimensional latent state at the next time
instance. For this purpose, we investigate the viability of a wide range of
regression techniques within a machine-learning framework for model selection, including support vector regression \cite{smola_2004},
random forests \cite{breiman_2001}, boosted decision trees \cite{Hastie2009},
$k$-nearest neighbors \cite{altman1992introduction}, the vectorial kernel
orthogonal greedy algorithm (VKOGA)
\cite{wirtz2013vectorial,wirtz2015surrogate}, sparse identification of
nonlinear dynamics (SINDy) \cite{brunton2016discovering}, and dynamic mode
decomposition (DMD) \cite{schmid2010dynamic}. 

Dynamics learning is an active area of research, and many approaches have been
proposed for learning dynamics in multiple fields. However, none has been
applied for the purpose of reconstructing missing simulation data, and few of
these methods are applicable to the kind of data we consider: very large-scale
states, and relatively few time instances at which the state has been
recorded. For example, the field of data-driven dynamical systems aims to
learn (typically continuous-time) dynamics from observations of the 
state or velocity.  Many of these methods enforce linear dynamics: DMD
\cite{schmid2010dynamic} enforces linear dynamics for the full-system state,
while methods based on Koopman theory enforce linear dynamics on a finite
number of system observables, which can be specified either manually
\cite{williams2014kernel,williams2015data,kawahara2016dynamic} or learned
using autoencoders
\cite{takeishi2017learning,otto2017linearly,lusch2017deep,morton2018deep}.
Other works have enabled nonlinear dynamics, but a rich library of candidate
basis functions must be manually specified \cite{brunton2016discovering}, and
model selection using validation data has been limited. In contrast to these
methods, our work aims to model nonlinear dynamics with a large candidate set of
functional forms for the velocity, and subsequently performs model selection using
validation data. We also emphasize that the aforementioned autoencoder
approaches are not applicable to the large-scale data sets we consider, as the
entire state is treated as an input to the autoencoder.

Relatedly, the field of state representation learning
\cite{lesort2018state,bohmer2015autonomous} aims to learn simultaneously both an underlying
low-dimensional latent space and discrete-time dynamics on this latent space
from a set of high-dimensional observations (e.g., raw pixels from a camera),
typically in a control or reinforcement-learning context.  The dynamics are typically constrained to be (locally) linear to
facilitate control \cite{NIPS2015_5951,watter2015embed,karl2016deep}. In
contrast to these approaches, we are not interested in control and thus need
not restrict the dynamics to be linear. Finally, these methods would encounter
difficulties when applied to the data sets we consider (i.e., a
high-dimensional observation space, but data at relatively few time
instances), as simultaneously learning an autoencoder and dynamics on an
observation space of dimension $\sim 10^7$ requires a substantial amount of
computation and data.


Another contribution of this work is the application of the proposed methodology to
a large-scale industrial CFD application characterized by a high-order
discretization with $320$ degrees of freedom per element and $40\,584$
elements, yielding nearly $13\times 10^6$ degrees of freedom. These numerical
experiments demonstrate the viability of the proposed method, as it 
significantly reduces the dimension of the solution to only 500 (for a 
compression ratio of $26\,000:1$), yet 
accurately recovers CFD data, as it incurs sub-1\% relative errors in the global
solution and drag over all desired $8\,600$ time instances. The numerical
experiments show that the VKOGA algorithm yields the best performance for
dynamics learning, due both to its low regression test error and its boundedness. In
contrast, while SINDy yielded low test regression error, it yielded an
unstable dynamical system to the unboundedness of the polynomial basis
functions it employs.

The remainder of the paper is organized as follows. Section
\ref{sec:motivation} provides the problem formulation. Section
\ref{sec:dimRed} describes the two-stage dimensionality-reduction process;
here, Section \ref{sec:autoencoders} describes local compression using
autoencoders, and Section \ref{sec:PCA} describes global compression using PCA.
Section \ref{sec:regressionDiscrete} describes dynamics learning, wherein
Section \ref{sec:regressionSetting} precisely describes the regression
setting, Section \ref{sec:training} describes the required training data,  and
Section \ref{sec:regressionModels} provides a description of all considered
regression models. Section \ref{sec:boundedness} provides some analysis
related to boundedness of the learned discrete-time velocity, which has
implications on the stability of the resulting dynamical-system model. Section
\ref{sec:experiments} reports the numerical experiments. Finally, Section
\ref{sec:conclusions} concludes the paper.

\section{Problem formulation: recovering a state sequence for high-order
discretizations}\label{sec:motivation}

The objective of this work is to recover a sequence of states
$
	\{\stateGlobalVectorArg{n}\}_{n=0}^\ntime\subset\RR{\nstateGlobalVector}
	$
given only the initial state
$\stateGlobalVectorArg{0}\in\RR{\nstateGlobalVector}$ and a subset of the
elements of the sequence
$\{\stateGlobalVectorArg{n}\}_{n\in\timeDomainSample}$ with
$\timeDomainSample\subset\{0,\ldots,\ntime\}$. This scenario often arises in
CFD applications, where the practitioner is interested in having access to the fluid solution
at many time instances, but it is prohibitively expensive to write the
solution to disk for all time instances of interest. In this case, the state corresponds to the vector of
fluid variables over the spatial domain, and the
sequence corresponds to the value of these degrees of freedom at time
instances of interest (which need not correspond to the time steps taken by
the time integrator).

This work aims to satisfy this objective in the context of high-order spatial
discretizations (e.g., discontinuous Galerkin) of partial differential
equations (PDEs), which also often arise in CFD applications. Such discretizations are characterized by $\nelements$ elements, each
of which has $\nstateLocalVector\gg 1$ local degrees of freedom. Mathematically, we can
characterize the state as the vectorization of local states as
 \begin{equation} 
	 \stateGlobalVector\defeq\vectorize{\stateLocalVector{1},\ldots,\stateLocalVector{\nelements}},
	  \end{equation} 
		where $\stateLocalVector{i}\in\RR{\nstateLocalVector}$, $i\innat{\nelements}$
		denotes the local state associated with the $i$th element, and 
		$\stateGlobalVector\in\RR{\nstateGlobalVector}$ denotes the global state
		with $\nstateGlobalVector = \nelements\nstateLocalVector$.

To achieve the stated objective, we adopt a two-stage process:
\begin{enumerate} 
	\item\label{step:dimred} \textit{Dimensionality reduction}. We apply
		hierarchical dimensionality reduction
		to the state $\stateGlobalVector$ in two steps.
		First,
we apply autoencoders to reduce the local dimensionality, i.e., the
dimensionality of the local states $\stateLocalVector{i}$, $i\innat{\nelements}$ (Section \ref{sec:autoencoders}). Second, we
apply principal component analysis (PCA) to globally reduce the dimensionality
of the set of autoencoder-compressed local states (Section \ref{sec:PCA}).
		This
		results in a low-dimensional state $\state =
		\fullToRed(\stateGlobalVector)\in\RR{\nstate}$ with
		$\nstate\ll\nstateGlobalVector$ from
		which the full state can be approximated as
		$\stateGlobalVector\approx
		\redToFull(\state)\in\RR{\nstateGlobalVector}$. Here,
		$\fullToRed:\RR{\nstateGlobalVector}\rightarrow\RR{\nstate}$ is the
		restriction operator associated with dimensionality reduction, and 
		$\redToFull:\RR{\nstate}\rightarrow\RR{\nstateGlobalVector}$ is the
		prolongation operator.
	\item\label{step:dynamicslearn} \textit{Dynamics learning}. We regress the discrete-time
		dynamics of the low-dimensional state $\state$ (Section
		\ref{sec:regressionDiscrete}) under the assumption that the sequence of
		low-dimensional states
		$ \{\stateArg{n}\}_{n=0}^\ntime\subset\RR{\nstate} $ can be recovered from
		the Markovian dynamical system
 \begin{equation} 
\stateArg{n+1} = \velocity(\stateArg{n}),\quad n=0,\ldots,\ntime-1
	  \end{equation} 
		with $\stateArg{0} = \fullToRed(\stateGlobalVectorArg{0})$
		for some unknown velocity $\velocity$. Then, we construct a regression
		approximation to the discrete-time velocity
		$\velocityApprox\approx\velocity$. This allows the desired sequence
		$\{\stateGlobalVectorArg{n}\}_{n=0}^\ntime$
		to be approximated as 
		$\{\stateGlobalVectorApproxArg{n}\}_{n=0}^\ntime$,
		where
		$\stateGlobalVectorApproxArg{0} = \stateGlobalVectorArg{0}$ and
		$\stateGlobalVectorApproxArg{n} = \redToFull(\stateApproxArg{n})$,
		$n=1,\ldots,\ntime$. Here, the approximated low-dimensional state evolves
		according to the approximated dynamics
 \begin{equation} 
	 \stateApproxArg{n+1} = \velocityApprox(\stateApproxArg{n}),\quad
		n=0,\ldots,\ntime-1
	  \end{equation} 
		with $\stateApproxArg{0} =
		\fullToRed(\stateGlobalVectorArg{0})$.
\end{enumerate}

We note that while stage 2 could be executed without stage 1,
learning discrete-time dynamics of a low-dimensional state requires
substantially less data and computational effort than doing so for a
high-dimensional state.



\section{Hierarchical dimensionality reduction for high-order
discretizations}\label{sec:dimRed}
This section describes the first stage of the proposed methodology:
dimensionality reduction.  Section \ref{sec:autoencoders} describes local
compression using autoencoders, and Section \ref{sec:PCA} describes global
compression using PCA.

\subsection{Local compression using autoencoders}\label{sec:autoencoders}

The first step of the proposed approach is to reduce the dimensionality of the
local states.  To achieve this, we make the observation that the total number
of available training examples is equal to the number of samples
$\card{\timeDomainSample}$ multiplied by the number of elements $\nelements$,
where $\card{\cdot}$ denotes the cardinality of a set. For fine spatial
discretizations, $\nelements$ is large, which implies that we may have access
to a large amount of training data, even for a single simulation with
$\card{\timeDomainSample}\sim 100$. Due to this fact, we apply autoencoders,
as they often enable very accurate nonlinear embeddings, yet
require a large amount of training data.

In the present context, autoencoders aim to find two mappings:  the
\textit{encoder} $\encoderFunction: \RR{\nstateLocalVector} \rightarrow
\RR{\nstateLocalVectorRed}$ that maps a local state
$\stateLocalVector{i}\in\RR{\nstateLocalVector}$ to lower-dimensional
representation $\stateLocalVectorRed{i}\in\RR{\nstateLocalVectorRed}$; i.e.
$\encoderFunction:\stateLocalVector{i}\mapsto\stateLocalVectorRed{i}$,
$i\innat{\nelements}$; and the \textit{decoder} $\decoderFunction:
\RR{\nstateLocalVectorRed}\rightarrow\RR{\nstateLocalVector}$.  Here,
$\nstateLocalVectorRed( \ll \nstateLocalVector)$ denotes the reduced dimension
of the local state vector or \textit{code}. Its value dictates the tradeoff
between compression and reconstruction accuracy.

The encoder $\encoderFunction$ takes the form of a feed-forward
neural network with $\nlayersEncoder$ layers constructed such that
\begin{equation} 
	\encoderFunction(\stateLocalVector{i};\weightEncoder)=\nnFuncEncoderArg{\nlayersEncoder}(\cdot;\weightNNEncoderarg{\nlayersEncoder})\circ
  \nnFuncEncoderArg{\nlayersEncoder-1}(\cdot;\weightNNEncoderarg{\nlayersEncoder-1})\circ
  \cdots\circ\nnFuncEncoderArg{1}(\stateLocalVector{i};\weightNNEncoderarg{1}),
\end{equation} 
where
$\nnFuncEncoderArg{i}(\cdot;\weightNNEncoderarg{i}):\RR{\nNeuronEncoder{i-1}}\rightarrow\RR{\nNeuronEncoder{i}}$,
$i=1,\ldots, \nlayersEncoder$
denotes the function applied at layer $i$ of the neural network;
$\weightNNEncoderarg{i}$, $i=1,\ldots,\nlayersEncoder$ denote the weights
employed at
layer $i$; $\nNeuronEncoder{i}$ denotes the dimensionality of the output at layer $i$;
and $\weightEncoder\equiv({\weightNNEncoderarg{1}},\ \ldots,\ 
{\weightNNEncoderarg{\nlayersEncoder}})$.  The
input is of dimension
$\nNeuronEncoder{0} = \nstateLocalVector$ and the final (output) layer
($i=\nlayersEncoder$)
produces the code
$\stateLocalVectorRed{i} = \encoderFunction(\stateLocalVector{i})$ such that
$\nNeuronEncoder{\nlayersEncoder}=\nstateLocalVectorRed$. An activation function $\activationfunction$ is applied in layers 1
to $\nlayersEncoder$ to
some function of the weights and the outputs from the previous layer
$\mathbf{y}_{i-1}\in\RR{\nNeuronEncoder{i-1}}$
such that
\begin{equation}
  \nnFuncEncoderArg{i}(\mathbf{y}_{i-1};\weightNNEncoderarg{i}) =
		h(\functionWithActivation(\weightNNEncoderarg{i},\mathbf{y}_{i-1})),
\end{equation}
where
$\functionWithActivation(\weightNNEncoderarg{i},\mathbf{y}_{i-1})=\weightNNEncoderarg{i}[1;\mathbf{y}_{i-1}]$
with $\weightNNEncoderarg{i}$ a real-valued matrix for a
traditional multilayer perceptron (MLP), while
$\functionWithActivation$ corresponds to a convolution operator for a
convolutional neural network with $\weightNNEncoderarg{i}$ providing the
convolutional-filter weights.
Note that $\mathbf{y}_{0}\defeq\stateLocalVector{i}$. The activation
function is applied element-wise to the vector argument; common choices
include the rectified linear unit (ReLU), the logistic sigmoid, and the
hyperbolic tangent.

Analogously, the decoder $\decoderFunction$ also takes the form of a
feed-forward artificial neural network with $\nlayersDecoder$ such that
\begin{equation} 
	\decoderFunction(\stateLocalVector{i};\weightDecoder)=\nnFuncDecoderArg{\nlayersDecoder}(\cdot;\weightNNDecoderarg{\nlayersDecoder})\circ
  \nnFuncDecoderArg{\nlayersDecoder-1}(\cdot;\weightNNDecoderarg{\nlayersDecoder-1})\circ
  \cdots\circ\nnFuncDecoderArg{1}(\stateLocalVector{i};\weightNNDecoderarg{1}),
\end{equation} 
with 
$\nnFuncDecoderArg{i}(\cdot;\weightNNDecoderarg{i}):\RR{\nNeuronDecoder{i-1}}\rightarrow\RR{\nNeuronDecoder{i}}$,
$i=1,\ldots, \nlayersDecoder$,
and $\weightDecoder\equiv[\vectorize{\weightNNDecoderarg{1}};\ \ldots\ ;
\vectorize{\weightNNDecoderarg{\nlayersDecoder}}]$. The
input is of dimension
$\nNeuronDecoder{0} = \nstateLocalVectorRed$  and the final (output) layer
($i=\nlayersDecoder$)
produces the high-dimensional representation
$\stateLocalVectorApprox{i} = \decoderFunction(\stateLocalVectorRed{i})$ such that
$\stateLocalVectorApprox{i}\approx\stateLocalVector{i}$ and
$\nNeuronDecoder{\nlayersDecoder}=\nstateLocalVector$. As before,
\begin{equation}
  \nnFuncDecoderArg{i}(\mathbf{y}_{i-1};\weightNNDecoderarg{i}) =
		h(\functionWithActivation(\weightNNDecoderarg{i},\mathbf{y}_{i-1})),
\end{equation}
where $\mathbf{y}_{i-1}\in\RR{\nNeuronDecoder{i-1}}$ is the output from the previous
layer, and $\mathbf{y}_{0}\defeq\stateLocalVectorRed{i}$.

To train the autoencoder, we use data corresponding to the local states at
a subset of time instances $\timeDomainTrainAuto\subseteq\timeDomainSample$,
i.e., the training data corresponds to
$$
\{\stateLocalVector{i}^n\}_{i=1,\ldots,\nelements}^
{{n}\in\timeDomainTrainAuto}.
$$
We then compute the weights $(\weightEncoderOpt,\weightDecoderOpt)$ by solving
the minimization problem
\begin{equation}\label{eq:autoencoderOpt}
	\underset{(\weightEncoder,\weightDecoder)}{\mathrm{minimize}}\sum_{n\in\timeDomainTrainAuto}\sum_{i=1}^{\nelements}\|
	\stateLocalVector{i}^n -
	\decoderFunction(\cdot;\weightDecoder)\circ\encoderFunction(\stateLocalVector{i}^n;\weightEncoder)\|_2^2+\regularization(\weightEncoder,\weightDecoder),
\end{equation}
where $\regularization$ denotes a regularization function, e.g., ridge, lasso,
elastic-net~\cite{zou2005regularization}.
We note that this optimization problem is often solved approximately to
improve generalization behavior; for instance, 
optimization iterations are often terminated when the error on an independent
validation set begins to increase; see Ref.~\cite{bottou2018optimization} for
a review of training deep neural networks.
The encoder and decoder are then set to $\encoderFunction =
\encoderFunction(\cdot;\weightEncoderOpt)$ and $\decoderFunction =
\decoderFunction(\cdot;\weightDecoderOpt)$, respectively.

After applying the encoder, the global reduced state takes the form
 \begin{equation} 
	 \stateGlobalVectorRed\defeq\vectorize{\stateLocalVectorRed{1},\ldots,\stateLocalVectorRed{\nelements}}\in\RR{\nelements\nstateLocalVectorRed},
	  \end{equation} 
		where $\stateLocalVectorRed{i} = \encoderFunction(\stateLocalVector{i})$,
		$i=1,\ldots,\nelements$. 
		Note that we can also write $\stateGlobalVectorRed =
		\encoderFunctionGlobal(\stateGlobalVector)$, where
 \begin{align} 
 \begin{split} 
	 \encoderFunctionGlobal&:\stateGlobalVector\mapsto
	 \vectorize{\encoderFunction(\stateLocalVector{1}),\ldots,\encoderFunction(\stateLocalVector{\nelements})}\\
	 &:\RR{\nstateGlobalVector}\rightarrow\RR{\nelements\nstateLocalVectorRed}.
	  \end{split} 
	  \end{align} 
		The dimension of the global reduced state
		$\stateGlobalVectorRed$ is $\nelements\nstateLocalVectorRed$; while this is
		significantly smaller than the dimension 
		$\nstateGlobalVector = \nelements\nstateLocalVector$
		of the full state $\stateGlobalVector$, it may still be large if the
		number of elements $\nelements$ is large, as is the case for fine spatial
		discretizations. To address this, Section \ref{sec:PCA} describes how PCA
		can be applied to reduce the
		dimensionality of the global reduced state $\stateGlobalVectorRed$.

\subsection{Global compression using PCA}\label{sec:PCA}

Because the dimensionality of the global reduced state $\stateGlobalVectorRed
=
\encoderFunctionGlobal(\stateGlobalVector)\in\RR{\nelements\nstateLocalVectorRed}$
may still be large, we proceed by applying dimensionality reduction to this
global quantity. However, in this case we have many fewer training examples
than in the case of local compression. This arises from the fact that local
compression employs a training set whose cardinality is the product of
the number of training time instances and the number of elements (which is often
large). The size of the global-compression training set is simply the
number of training time instances.  Thus, for global compression, we use PCA, a
dimensionality reduction method that does not rely on access to a large amount
of data. Furthermore, because we are
usually considering large-scale data, it may be computationally costly to
compute global principal components to represent the global reduced state
vector $\stateGlobalVectorRed$; thus, we consider piecewise PCA for this
purpose.

To achieve this, we first 
define a set of training instances
$\timeDomainTrainPCA\equiv\{\timeArgPCA{j}\}_{j}
\subseteq\timeDomainSample$ and associated
global reduced states
$\{\stateGlobalVectorRed^n\}^{n\in\timeDomainTrainPCA}$ that will be employed to compute the principal components.  
Then, we decompose the global reduced state into $\nPCA$
components
$\stateGlobalVectorRedArg{i}$, $i=1,\ldots,\nPCA$ such that 
$\stateGlobalVectorRed\equiv[\stateGlobalVectorRedArg{1}^T\ \cdots\
\stateGlobalVectorRedArg{\nPCA}^T]^T$.
Next, we compute the singular value decompositions
\begin{equation} 
	[(\stateGlobalVectorRedArg{i}^{\timeArgPCA{1}}-\stateGlobalVectorRedAvgArg{i})\ \cdots\
	(\stateGlobalVectorRedArg{i}^{\timeArgPCA{\ntimeTrainPCA}}-\stateGlobalVectorRedAvgArg{i})] =
	\leftSingArg{i}\SingArg{i}\rightSingArg{i}^T, \quad i=1,\ldots,\nPCA,
\end{equation} 
where 
\begin{equation}
	\stateGlobalVectorRedAvgArg{i}\defeq
	\frac{1}{\ntimeTrainPCA}\sum_{n\in\timeDomainTrainPCA}\stateGlobalVectorRedArg{i}^n, \quad i=1,\ldots,\nPCA
\end{equation}
denote the sample means, and 
$\leftSingArg{i} \equiv[\leftSingVecArg{i}{1}\ \cdots\ \leftSingVecArg{i}{\ntimeTrainPCA}]$,
$\SingArg{i}= \diag{\sigma_{i,j}}_{j=1}^{\ntimeTrainPCA}$ with 
\begin{equation}
\sigma_{i,1}\geq\cdots\sigma_{i,\ntimeTrainPCA}\geq 0.
\end{equation}
Subsequently, PCA truncates the left singular vectors to obtain the basis matrix
 \begin{equation} 
	 \PCAArg{i} = \left[\leftSingVecArg{i}{1}\ \cdots\
	 \leftSingVecArg{i}{\nstateArg{i}}\right],
	  \end{equation} 
		where $\nstateArg{i}$ can be set according to a statistical energy
		criterion, for example.
We note that the resulting basis matrix satisfies the minimization problem
\begin{equation}
	\range{\PCAArg{i}} = \underset{\subspace,\, \dim(\subspace) = \nstateArg{i}}{\arg\min}
	\sum_{j=1}^\ntimeTrainPCA\|
	(\identity -
	\projection{\subspace})(\stateGlobalVectorRedArg{i}^{\timeArgPCA{j}}-\stateGlobalVectorRedAvgArg{i})
	\|_2^2,
\end{equation}
where $\projection{\subspace}$ denotes the orthogonal 
projector (in the Euclidean norm) onto the subspace $\subspace$; and 
minimization is taken over all subspaces of dimension $\nstateArg{i}$ (i.e.,
the Grassmannian).

We then compute the singular value decomposition
\begin{equation} \label{eq:SVDGlobal}
	[\diag{\PCAArg{i}}]^T[(\stateGlobalVectorRed^{\timeArgPCA{1}}-\stateGlobalVectorRedAvg)\ \cdots\
	(\stateGlobalVectorRed^{\timeArgPCA{\ntimeTrainPCA}}-\stateGlobalVectorRedAvg)] =
	\leftSing\Sing\rightSing^T,
\end{equation} 
with 
$\leftSing \equiv[\leftSingVec{1}\ \cdots\ \leftSingVec{\ntimeTrainPCA}]$
and
\begin{equation}
	\stateGlobalVectorRedAvg\defeq
	\frac{1}{\ntimeTrainPCA}\sum_{n\in\timeDomainTrainPCA}\stateGlobalVectorRed^n.
\end{equation}

We define the global basis matrix  
$\PCA\defeq\diag{\PCAArg{i}}\left[\leftSingVec{1}\ \cdots\
	 \leftSingVec{\nstate}\right]\in\RR{\nelements\nstateLocalVectorRed\times
\nstate}$ with
$\nstate\leq\sum_i\nstateArg{i}\ll\nelements\nstateLocalVectorRed\ll\nelements\nstateLocalVector$
determined from an energy criterion, for example. 

Now, we can define the reduced global state:
\begin{equation} \label{eq:redStateFromGlobalState}
	\stateArg{n} = \fullToRed(\stateGlobalVectorArg{n})
	,\quad
	n=0,\ldots,\ntime,
\end{equation} 
where $\fullToRed:\stateGlobalVector\mapsto 
\PCA^T(\encoderFunctionGlobal(\stateGlobalVector)-\stateGlobalVectorRedAvgArg{i})$.
The mapping from reduced state to the approximated global state is then
$\redToFull:\state\mapsto\decoderFunctionGlobal(\PCA\state+\stateGlobalVectorRedAvgArg{i})$,
where
 \begin{align} 
 \begin{split} 
	 \decoderFunctionGlobal&:\stateGlobalVectorRed\mapsto
	 \vectorize{\decoderFunction(\stateLocalVectorRed{1}),\ldots,\decoderFunction(\stateLocalVectorRed{\nelements})}\\
	 &:\RR{\nelements\nstateLocalVectorRed}\rightarrow\RR{\nstateGlobalVector}.
	  \end{split} 
	  \end{align} 

\section{Dynamics learning using regression}\label{sec:regressionDiscrete}

This section describes dynamics
learning.  In particular, we assume that the sequence of low-dimensional
states $ \{\stateArg{n}\}_{n=0}^\ntime\subset\RR{\nstate} $ with
$\stateArg{n}$ defined in Eq.~\eqref{eq:redStateFromGlobalState} can be
recovered from the Markovian discrete-time dynamical system
\begin{equation}\label{eq:discTimeState}
	\stateArg{n+1} = \velocity(\stateArg{n}),\quad n=0,\ldots,\ntime-1
\end{equation}
for some (unknown) discrete-time velocity
$\velocity:\RR{\nstate}\rightarrow\RR{\nstate}$. 
Then, if such a velocity exists and can be computed, the entire sequence of
low-dimensional states $ \{\stateArg{n}\}_{n=0}^\ntime\subset\RR{\nstate} $
can be recovered from Eq.~\eqref{eq:discTimeState} by specifying the initial
state $\stateArg{0}$ only; this allows an approximation to the sequence of
states to be computed as $ \{\stateGlobalVectorArg{n}\}_{n=1}^\ntime\approx
\{\redToFull(\stateArg{n})\}_{n=1}^\ntime$ (note that
$\stateGlobalVectorArg{0}$ is given).

Because we do not have direct access to this discrete-time velocity
$\velocity$, we must generate an approximation
$\velocityApprox\approx\velocity$ with
$\velocityApprox:\RR{\nstate}\rightarrow\RR{\nstate}$.  This work
constructs this approximation by regressing each component of the velocity.
Once the approximated velocity $\velocityApprox$ is constructed, the sequence
of low-dimensional states $ \{\stateArg{n}\}_{n=0}^\ntime\subset\RR{\nstate} $
can be approximated as
$ \{\stateApproxArg{n}\}_{n=0}^\ntime\subset\RR{\nstate}$, where 
$\stateApproxArg{0}=\stateArg{0}$ and $\stateApproxArg{n}$,
$n=1,\ldots,\ntime$ satisfies the approximated discrete-time dynamics 
\begin{equation}\label{eq:discTimeApprox}
\stateApproxArg{n+1} = \velocityApprox(\stateApproxArg{n}),\quad
	n=0,\ldots,\ntime-1.
\end{equation}
Then, the desired sequence of states
$\{\stateGlobalVectorArg{n}\}_{n=0}^\ntime$ can be approximated as
$\{\stateGlobalVectorApproxArg{n}\}_{n=0}^\ntime$, where
$\stateGlobalVectorApproxArg{0} = \stateGlobalVectorArg{0}$ and
$\stateGlobalVectorApproxArg{n} = \redToFull(\stateApproxArg{n})$,
$n=1,\ldots,\ntime$. 

\subsection{Regression setting}\label{sec:regressionSetting}
To cast this as a regression problem, we consider the $i$th equation of
\eqref{eq:discTimeState}:
\begin{equation} 
	\stateArgs{i}{n+1} = \velocityArg{i}(\stateArg{n}), \quad
	n=0,\ldots,\ntime-1,\quad i=1,\ldots,\nstate,
\end{equation} 
where $\state\equiv[\stateArgs{1}{}\ \cdots\
\stateArgs{\nstate}{}]^T$ and
$\velocity\equiv[\velocityArg{1}\ \cdots\ \velocityArg{\nstate}]^T$.
This expression shows that the $i$th component of the velocity
$\velocityArg{i}$ can be interpreted
as a function that maps the reduced state at the current time step
$\stateArg{n}$ to the $i$th element of the state at the next time step
$\stateArgs{i}{n+1}$ for $n=0,\ldots,\ntime-1$.

Based on this observation, we construct independent regression
models $\velocityApproxArg{i}\approx\velocityArg{i}$, $i=1,\ldots,\nstate$
whose \textit{features} (i.e., regression-model inputs) correspond to the
state at the current time step $\stateArg{n}$, and whose \textit{response}
(i.e., regression-model output) corresponds to the $i$th element of the state
at the next time step $\stateArgs{i}{n+1}$  for $n=0,\ldots,\ntime-1$.
The approximated velocity is then
$\velocityApprox\equiv[\velocityApproxArg{1}\ \cdots\
\velocityApproxArg{\nstate}]^T$.
Section \ref{sec:training} describes the training data used to construct the
regression models, and Section \ref{sec:regressionModels} describes the
proposed regression models.

\subsection{Training data}\label{sec:training}

Based on the regression setting described in Section
\ref{sec:regressionSetting}, it is clear that the appropriate training data
for constructing the $i$th regression model $\velocityApproxArg{i}$
corresponds to response--feature pairs
$$
\trainingDataArg{i}\defeq\{(\stateArgs{i}{n+1},\stateArg{n})\}_{n\in\timeDomainTrainRegression},
$$
where
$\timeDomainTrainRegression\equiv\{\timeArgRegression{j}\}_j\subset\timeDomainSample$
satisfies $\timeArgRegression{j}+1\in\timeDomainSample$,
$j=1,\ldots,\ntrainingData$, where
$\ntrainingData\defeq\ntimeTrainRegression=\card{\trainingDataArg{i}}$,
$i=1,\ldots,\nstate$ such that sequential states (required for
accessing both the features and the response) are accessible from the
available sample set $\timeDomainSample$.

\subsection{Regression models}\label{sec:regressionModels}

In this work, we consider a wide range of types of models for constructing
regression models $\velocityApproxArg{i}$, $i=1,\ldots,\nstate$.  
We denote a generic regression model
as $\regressionModel$ and its training data as
 \begin{equation} 
	 \trainingData\equiv\{(\responseTrain{i},\featuresTrain{i})\}_{i=1}^{\ntrainingData}
	  \end{equation} 
For all candidate models, we employ validation for model selection; this
process chooses values of the hyperparameters characterizing each model.

\subsubsection{Support vector regression}
\label{section:SVR}
Support vector regression (SVR)~\cite{smola_2004} employs a model
\begin{align}\label{eq:SVRone}
	\regressionModel(\features;\weight) =
	\innerprod{\weight}{\svrMapping(\features)}+b,
\end{align}
where $\svrMapping:\RR{\nfeatures}\rightarrow\SVRfeaturespace$,
$\weight\in\SVRfeaturespace$, and $\SVRfeaturespace$ is a (potentially
unknown) feature space
equipped with inner product $\innerprod{\cdot}{\cdot}$.  SVR aims to compute a
`flat' function (i.e., $\innerprod{\weight}{\weight}$
small) that penalizes prediction errors that exceed a specified threshold
$\epsilon$ (i.e., soft margin loss).  SVR uses slack variables $\boldsymbol{\xi}$ and
$\boldsymbol{\xi}^\star$  to address deviations exceeding  epsilon margin 
$(\epsilon)$ and employs the box constraint $(C)$ to penalize these deviations,
leading to the primal formula~\cite{chang_2011}
\begin{align}\label{eq:SVRopt}
	\begin{split}
		\underset{\weightDummy,b,\boldsymbol{\xi},\boldsymbol{\xi}^\star}{\mathrm{minimize}}\
		&\frac{1}{2}\innerprod{\weightDummy}{\weightDummy}+C\sum_{i=1}^{\ntrainingData}(\xi_i+\xi_i^\star),\\
		\text{subject to}\
		&\responseTrain{i}-\innerprod{\weightDummy}{\svrMapping(\featuresTrain{i})}-b \le \epsilon +
		\xi_i,\quad i=1,\ldots,\ntrainingData,
\\
&\innerprod{\weightDummy}{\svrMapping(\featuresTrain{i})}+b-\responseTrain{i} {}\le \epsilon +
\xi_i^\star,\quad i=1,\ldots,\ntrainingData,
\\
&\boldsymbol{\xi},\,\boldsymbol{\xi}^\star {}\ge \zero,
	\end{split}
\end{align}
whose corresponding dual problem is
\begin{align*}
\begin{split}
	\underset{\boldsymbol{\alpha},\boldsymbol{\alpha}^\star}{\text{minimize}} \
	&\frac{1}{2}(\boldsymbol{\alpha}-\boldsymbol{\alpha}^\star)^T\mathbf{Q}(\boldsymbol{\alpha}-\boldsymbol{\alpha}^\star)
{}+\epsilon \mathbf{1}^T(\boldsymbol{\alpha}+\boldsymbol{\alpha}^\star)
	{}-[\responseTrain{1}\ \cdots\ \responseTrain{\ntrainingData}]^T(\boldsymbol{\alpha}-\boldsymbol{\alpha}^\star),\\
\text{subject to}\ &\mathbf{1}^T(\boldsymbol{\alpha}-\boldsymbol{\alpha}^\star){}=0,
\\
&0 \le \alpha_i,\,\alpha_i^\star{}\le C,\quad i=1,\ldots,\ntrainingData.
\end{split}
\end{align*}
Here, $Q_{ij}\defeq K(\featuresTrain{i},\,\featuresTrain{j})\defeq
\innerprod{\svrMapping(\featuresTrain{i})}{\svrMapping(\featuresTrain{j})}$ and
$\weight=\sum_{i=1}^\ntrainingData(\alpha_i-\alpha_i^\star)\svrMapping(\featuresTrain{i})$.
The resulting model \eqref{eq:SVRone} can be equivalently expressed as
\begin{align}
\regressionModel(\features;\weight) = \sum_{i=1}^{\ntrainingData}
\left(\alpha_i-\alpha_i^\star\right) K(\featuresTrain{i},\features) + b.
\label{eq:SVRtwo}
\end{align}
There exist many kernel functions $K(\featuresTrain{i},\,\featuresTrain{j})$ that
correspond to an inner product in a feature space $\SVRfeaturespace$.
We consider both a Gaussian radial basis function (RBF) kernel
$K(\featuresTrain{i},\,\featuresTrain{j})=\exp(-\gamma\|\featuresTrain{i}-\featuresTrain{j}\|^2)$ and polynomial kernel  $K(\featuresTrain{i},\,\featuresTrain{j})=(1+[\featuresTrain{i}]^T\featuresTrain{j})^q$.
When the RBF kernel is used, we refer to the method as SVRrbf; when using the
polynomial kernel with $q=2$ or $3$, we refer to the method SVR2
and SVR3, respectively. In the numerical experiments, we apply a box constraint
$C = 1$  and select the sensitivity margin $\epsilon$ using a validation set.

\subsubsection{Random forests}
\label{section:RF}
Random forests (RF) \cite{breiman_2001} use decision trees constructed by decomposing feature space (in this case $\RR{\nstate}$) along
canonical directions in a way that sequentially minimizes the mean-squared
prediction error over the training data. The prediction generated by a
decision tree corresponds to the average value of the response over the
training data that reside within the same feature-space region as the
prediction point. 

Decision trees are low bias and high variance; thus, random forests employ a
variance-reduction method, namely bootstrap aggregating (i.e., bagging), to
reduce the prediction variance. Here, $\ntree$ different data sets are
generated by sampling the original training set $\trainingData$ with
replacement. The method then constructs a decision tree from each of these
training sets, yielding $\ntree$ regression functions
$\regressionModelArg{i}$, $i=1,\ldots,\ntree$. The ultimate regression
function corresponds to the average prediction across the ensemble:
\begin{align}
	\regressionModel(\features) = \frac{1}{\ntree}\sum_{1=1}^\ntree \regressionModelArg{i}(\features).
	\label{eq:regressiontree}
\end{align}
To decorrelate the decision trees and further reduce prediction
variance, random forests introduce an additional source of randomness. When training
each tree, a random subset of $\nfeaturesSplit\ll \nfeatures$ features is
considered when performing the feature-space split at each node in the tree.

The hyper-parameter we consider for this approach corresponds to the number of trees in
the ensemble $\ntree$.  We set $\nfeaturesSplit = \nfeatures/3$  for splitting during training.

\subsubsection{Boosted decision trees}
\label{sec:boosting}
Boosted decision trees \cite{Hastie2009} combines weak learners (decision trees) into a single strong learner in an iterative fashion. The algorithm adds one tree at each stage.
Let $\LSBoosting_m$ be the aggregate model at stage $m$, where
$m=1,\ldots,\ntree$. The gradient boosting algorithm improves upon
$\LSBoosting_m$ by constructing a new model that adds a weak learner
$\WeekLearner_{m}$ to yield a more accurate model
$\LSBoosting_{m+1}(\features)=\LSBoosting_{m}(\features)+\alpha_m \WeekLearner_{m}(\features)$, where $\alpha_m$ is a constant.

The numerical experiments employ LSBoost, where the goal is to minimize the
mean squared error 
$
\sum_{i=1}^\ntrainingData|\regressionModel(\featuresTrain{i})-\responseTrain{i}|^2
$.  Decision stumps are applied as weak learners $ \WeekLearner_{m}(\features)$. Initially, $\LSBoosting_{0} = \WeekLearner_0(\features)$.
At iteration $m$ (for $m = 1, \cdots, \ntree$), compute the residual $r_i = \responseTrain{i} - \LSBoosting_{m -1}(\featuresTrain{i})$ ($i=1, \ldots, \ntrainingData$) and solve the optimization problem
$$(\alpha_m, \WeekLearner_{m}) = {\rm{arg \  min}}_{\alpha,  \WeekLearner} \sum_{i=1}^\ntrainingData |r_i - \alpha \WeekLearner(\featuresTrain{i})|^2.$$ The final result is given by $\regressionModel(\features) = \LSBoosting_{\ntree}(\features)$.

The hyper-parameter we consider for this approach corresponds to the number of
trees $\ntree$ in the ensemble.

\subsubsection{$k$-nearest neighbors}
\label{section:knn}

The $k$-nearest neighbors ($k$-NN) method \cite{altman1992introduction}
produces predictions corresponding to a weighted average of the responses
associated with the $k$-nearest training points in feature space, i.e.,
\begin{align*}
	\regressionModel(\features) = \sum_{i\in\nearestSetArg{k}}
	\knnWeight(\featuresTrain{i},\features)
	\featuresTrain{i}.
\end{align*}
Here, $\nearestSetArg{k}\subseteq\{1,\ldots,\ntrainingData\}$ with
$\card{\nearestSetArg{k}}=k(\leq\ntrainingData)$ satisfies
$\|\features-\featuresTrain{i}\|_2\leq 
\|\features-\featuresTrain{j}\|_2$ for all $i\in\nearestSetArg{k}$,
$j\in\{1,\ldots,\ntrainingData\}\setminus \nearestSetArg{k}$.
In the numerical experiments, we consider only uniform
weights
$\knnWeight(\featuresTrain{j},\features)\defeq\frac{1}{k}$.

The hyper-parameter we consider for this method corresponds to the number of
nearest neighbors $k$. 

\subsubsection{Vectorial kernel orthogonal greedy algorithm (VKOGA)}
\label{sec:VKOGA}
In support vector machines, kernel-based interpolation is applied to
scalar-valued functions (as in Eq.~\refeq{eq:SVRtwo}). If each individual
regression model $\velocityApproxArg{i}$, $i=1,\ldots,\nstate$
requires $\ntrainingData$  sets of kernels to construct its approximation, 
 the resulting regression model for the vector of responses $\velocityApprox$
 would require  $\nstate \ntrainingData$ independent kernels. This is
 expensive for both training and evaluation when $\nstate \gg
 1$. To reduce the overall number of kernels, one can impose the 
restriction that a common subspace be used for every component of the
vector-valued response. In particular, the vectorial kernel orthogonal greedy
algorithm (VKOGA) \cite{wirtz2013vectorial,wirtz2015surrogate} yields a regression
model of the form
\begin{align}
\velocityApprox(\features) = \sum_{i=1}^{\nVKOGA}
\VKOGAbasis_i K(\features_{i},\features).
\label{eq:VKOGAExp}
\end{align}
where $K:\RR{\nstate\times\nstate}\rightarrow\RR{}$ denotes a kernel function,
$\featuresTrain{i}\in\RR{\nstate}$, $i=1,\ldots,\nVKOGA$ denotes the kernel centers and $\VKOGAbasis_i\in\RR{\nstate}$, $i=1,\ldots,\nVKOGA$ denote vector-valued
basis functions. 

VKOGA first computes kernel functions $K(\featuresTrain{i}, \cdot)$ by a greedy algorithm. The greedy algorithm determines kernel centers from $\Omega =\{\featuresTrain{1}, \ldots, \featuresTrain{\ntrainingData}\}$. Initially, let $\Omega_0 = \emptyset$. 
At stage $m$, choose 
$$\features_m := \mathop {\rm arg max}_{\features \in \Omega \backslash \Omega_{m-1}} | \langle \velocity, \boldsymbol{\phi}_x^{m-1}  \rangle |,$$
 where   $\boldsymbol{\phi}_x^{m-1}$ denotes the orthogonal remainer of $K(\features, \cdot)$ with respect to the reproducing kernel Hilbert space spanned by $\{K(\features_{1}, \cdot), \ldots, K(\features_{m-1}, \cdot) \}$. Then $\Omega_m = \Omega_{m-1} \cup \{\features_m\}$.
 After
the kernel centers have been computed, the basis functions $\VKOGAbasis_i$,
$i=1,\ldots,\nVKOGA$ are determined by a least-squares approximation to best
fit training data. 

In the numerical experiments, we apply Gaussian RBF kernel
$K(\featuresTrain{i},\,\featuresTrain{j})=\exp(-\gamma\|\featuresTrain{i}-\featuresTrain{j}\|^2)$.
 The hyper-parameter we consider for this method corresponds
to the number of kernel functions $\nVKOGA$.

\subsubsection{Sparse identification of nonlinear dynamics (SINDy)}

The sparse identification of nonlinear dynamics (SINDy) method
\cite{brunton2016discovering} is an application of 
linear regression to learning dynamics. Although SINDy was originally devised
for continuous-time dynamics, it can be easily extended to the discrete-time
case. In this context, it constructs a model
\begin{equation}\label{eq:SINDy} 
	\regressionModel(\features;\weight) =
	\sum_{i=1}^\nSINDyterm\basisFunctionArg{i}(\features)\weightArg{i} 
\end{equation} 
where $\basisFunctionArg{i}:\RR{\nstate}\rightarrow\RR{}$,
$i=1,\ldots,\nSINDyterm$ denote a ``library'' of prescribed basis functions.
The least absolute shrinkage and selection operator
(LASSO)~\cite{tibshirani1996regression} is used to determine a sparse set of
coefficients $(\weightArg{1},\ldots,\weightArg{\nSINDyterm})$.

Hyperparameters in this approach consist of the selected basis functions.  For
simplicity, we consider only linear and quadratic functions in the library,
i.e., $\basisFunctionArg{i} \in  \{ \stateArgs{1}{}, \ldots,
\stateArgs{\nstate}{},  {\stateArgs{1}{}} {\stateArgs{1}{}}, \stateArgs{1}{}
\stateArgs{2}{}, \ldots, \stateArgs{\nstate}{} \stateArgs{\nstate}{} \}$.

\subsubsection{Dynamic mode decomposition (DMD)}

Dynamic mode decomposition (DMD) \cite{schmid2010dynamic}
computes the approximated velocity $\velocityApprox$ as the linear operator
\begin{equation} 
	\velocityApprox:\state\mapsto 
	[\stateArg{\timeArgRegression{2}}\ \cdots\
	\stateArg{\timeArgRegression{\ntrainingData+1}}]
		[\stateArg{\timeArgRegression{1}}\ \cdots\
	\stateArg{\timeArgRegression{\ntrainingData}}
		]^+	
	\state
\end{equation} 
where the superscript $+$ denotes the Moore--Penrose pseudoiverse.
Critically, note that this approach ensures that $\stateArg{n+1} =
\velocityApprox(\stateArg{n})$, $n\in\timeDomainTrainRegression$ if
$\nstate\geq \ntrainingData$.

\subsection{Boundedness}\label{sec:boundedness}

This section provides analysis related to the stability of the approximated
discrete-time dynamical system \eqref{eq:discTimeApprox} when the different
regression methods described in Section \ref{sec:regressionModels} are
employed to generate the approximated velocity $\velocityApprox$.

First, we note that a function $\velocitySymb: \xspace \to \RR{\nstate}$ is
\textit{bounded} if the set of its values is bounded. In other words, there
exists a real number $\bignumber$ such that $\|\velocitySymb(\stateSymb)\| \le
\bignumber$ for all $\stateSymb \in \xspace$. An important special case is a
bounded sequence, where $\xspace$ is taken to be the set $\mathbb{N}$ of
natural numbers. Thus a sequence $$(\stateArg{0}, \stateArg{1}, \stateArg{2}, \ldots)$$
is bounded if there exists a real number $\bignumber$ such that $\|
\state(\timeSymb) \| \le \bignumber$ for every natural number $\timeSymb$.
With $\stateApproxArg{n+1} = \velocityApprox(\stateApproxArg{n})$, boundedness of
$\velocityApprox: \RR{\nstate} \to \RR{\nstate}$ yields boundedness of the sequence
$(\stateApproxArg{0}, \stateApproxArg{1}, \stateApproxArg{2}, \ldots)$.

\begin{lemma}
	Let $\idset$ be an index set. With $\rho_\id: \RR{\nstate} \to
	\RR{}$, $\{\rho_\id\}_{\id \in \idset}$ is a set of scalar
	basis functions. Suppose $\velocityApprox \in {\rm{span}}\left( \{
		\rho_\id\}_{\id \in \idset} \right)$, i.e.,
	there
	exist $\VKOGAbasis_i\in\RR{}$, $i=1,\ldots, n$  (with $n$ finite)
	such that
$$\velocityApprox(\state) = \sum_{i=1}^n  \VKOGAbasis_i \rho_\id(\state).$$ 
	If each basis function
	$\rho_\id$ is bounded, then $\velocityApprox$ is bounded.
\end{lemma}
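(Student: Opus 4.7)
The plan is to prove this directly via the triangle inequality and the finiteness of the sum. The key observation is that $\velocityApprox$ is a \emph{finite} linear combination of bounded functions, and finite linear combinations preserve boundedness in any normed space.

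First, I would invoke the hypothesis that each $\rho_\id$ is bounded: for every $\id \in \{1,\ldots,n\}$, there exists $M_\id \in \RR{}$ such that $|\rho_\id(\state)| \le M_\id$ for all $\state \in \RR{\nstate}$. Next, for an arbitrary $\state \in \RR{\nstate}$, I would apply the triangle inequality to the defining expression $\velocityApprox(\state) = \sum_{i=1}^n \VKOGAbasis_i \rho_\id(\state)$, yielding
\begin{equation*}
\|\velocityApprox(\state)\| \;\le\; \sum_{i=1}^n \|\VKOGAbasis_i\|\,|\rho_\id(\state)| \;\le\; \sum_{i=1}^n \|\VKOGAbasis_i\|\, M_\id.
\end{equation*}
The right-hand side is a finite sum of non-negative real numbers independent of $\state$, hence a finite constant. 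Taking $\bignumber \defeq \sum_{i=1}^n \|\VKOGAbasis_i\|\, M_\id$ then yields $\|\velocityApprox(\state)\| \le \bignumber$ for all $\state \in \RR{\nstate}$, which is precisely the definition of boundedness of $\velocityApprox$ given at the start of Section~\ref{sec:boundedness}.

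There is really no meaningful obstacle here; the statement is essentially a direct consequence of the triangle inequality together with the fact that $n$ is finite. The only subtlety worth flagging is that finiteness of $n$ is essential: an infinite linear combination of bounded functions need not be bounded (e.g., Fourier-type expansions with slowly decaying coefficients). Since the hypothesis explicitly assumes $n$ finite, this concern does not arise. I would keep the proof to two or three lines in the final write-up, and then in the surrounding discussion emphasize that this lemma is the mechanism by which the subsequent analysis of individual regression methods (VKOGA, $k$-NN, random forests, etc.) establishes boundedness of $\velocityApprox$ and, consequently, of the iterates generated by $\stateApproxArg{n+1} = \velocityApprox(\stateApproxArg{n})$.
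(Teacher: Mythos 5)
Your proof is correct and follows essentially the same route as the paper's: apply the triangle inequality to the finite linear combination and bound each term using the boundedness of the individual $\rho_\id$. The only cosmetic difference is that you take $\bignumber = \sum_{i=1}^n \|\VKOGAbasis_i\| M_i$ while the paper bounds this further by $n \maxcoeff \bignumber$ with $\maxcoeff$ and $\bignumber$ the maxima of the coefficients and the individual bounds; both constants are valid and the arguments are interchangeable.
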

\begin{proof}
	Let $\maxcoeff = \max_{i=1}^n  \|\VKOGAbasis_i \| $.
Since $\rho_\id(\cdot)$ is bounded, $\|\rho_\id(\features)\|\le \bignumber_i$
	for all
$\state \in \RR{\nstate}$. Let $\bignumber = \max_{i=1}^n \bignumber_i$. Thus,
$$  \|\velocityApprox(\state)\| = \left \| \sum_{i=1}^n  \VKOGAbasis_i \rho_\id(\state) \right \|   \le  \sum_{i=1}^n  \left \|  \VKOGAbasis_i \rho_\id(\state) \right\| \le \sum_{i=1}^n \| \VKOGAbasis_i \| \cdot \|\rho_\id(\state) \| \le n \maxcoeff  \bignumber.   $$
The last expression provides an upper bound for $\|\velocityApprox(\state)\|$.
\end{proof}

\begin{corollary}
	If the approximated velocity $\velocityApprox$ is constructed using SVRrbf
	(as described in Section \ref{section:SVR}) or VKOGA (as described in
	Section \ref{sec:VKOGA}), then the approximated velocity is bounded.
\end{corollary}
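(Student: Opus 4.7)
The proof plan is to apply the preceding lemma componentwise to each of the two regression schemes by identifying the relevant basis functions and verifying they are bounded. The central observation that drives both cases is that the Gaussian RBF kernel is uniformly bounded: since $\exp(-\gamma\|\features-\featuresTrain{i}\|^2)\in(0,1]$ for all $\features\in\RR{\nstate}$, the scalar function $K(\featuresTrain{i},\cdot)$ is bounded by $1$ on its entire domain, for every fixed center $\featuresTrain{i}$.

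\textbf{Case 1 (SVRrbf).} I would recall that each scalar regression model has the closed form \eqref{eq:SVRtwo}, namely $\regressionModel(\features;\weight)=\sum_{i=1}^{\ntrainingData}(\alpha_i-\alpha_i^\star)K(\featuresTrain{i},\features)+b$, where the sum is finite and the constant $b$ can be absorbed by appending the constant basis function $\rho_{0}\equiv 1$ (trivially bounded). Thus the $j$th component $\velocityApproxArg{j}$ of $\velocityApprox$ lies in the span of the finite set $\{K(\featuresTrain{i},\cdot)\}_{i=1}^{\ntrainingData}\cup\{1\}$, each element of which is bounded, so Lemma~\ref{lemma:...} (the preceding lemma) applies and yields a uniform bound $|\velocityApproxArg{j}(\state)|\le M_j$.

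\textbf{Case 2 (VKOGA).} The approximation has the form \eqref{eq:VKOGAExp}, $\velocityApprox(\state)=\sum_{i=1}^{\nVKOGA}\VKOGAbasis_i K(\features_i,\state)$, with vector coefficients $\VKOGAbasis_i\in\RR{\nstate}$. I would project onto the $j$th coordinate to obtain $\velocityApproxArg{j}(\state)=\sum_{i=1}^{\nVKOGA}[\VKOGAbasis_i]_j K(\features_i,\state)$, which is again a finite linear combination of bounded scalar RBF basis functions with scalar coefficients; invoking the lemma gives a uniform bound on each $\velocityApproxArg{j}$.

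\textbf{Assembly.} With componentwise bounds $|\velocityApproxArg{j}(\state)|\le M_j$ in hand for both methods, $\|\velocityApprox(\state)\|_2\le\bigl(\sum_{j=1}^{\nstate}M_j^2\bigr)^{1/2}$, independent of $\state$, so $\velocityApprox$ is bounded as a map $\RR{\nstate}\to\RR{\nstate}$. There is no substantial obstacle here: the corollary is essentially a bookkeeping exercise identifying the basis $\{\rho_\id\}$ of the lemma with the kernel expansion, and checking that the Gaussian RBF (and the bias) are bounded. The only mild subtlety worth flagging is that the lemma is stated for scalar $\VKOGAbasis_i$ whereas VKOGA uses vector-valued coefficients; this is handled cleanly by applying the lemma one coordinate at a time and then aggregating through a finite-dimensional norm equivalence.
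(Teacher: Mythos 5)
Your proof is correct and follows essentially the same route as the paper's: observe that the Gaussian RBF kernel satisfies $K(\featuresTrain{i},\state)=\exp(-\gamma\|\featuresTrain{i}-\state\|^2)\le 1$ and then invoke the preceding lemma with $\rho_\id = K(\featuresTrain{i},\cdot)$. Your additional bookkeeping (absorbing the bias $b$ into a constant basis function and applying the lemma coordinatewise to accommodate VKOGA's vector-valued coefficients) tidies up details the paper glosses over, but it is the same argument.
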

\begin{proof}
Since a Gaussian kernal satisfies
$K(\featuresTrain{i},\, \state)=\exp(-\gamma\|\featuresTrain{i}- \state \|^2) \le 1$ for all $\state \in \RR{\nstate}$,
	it is bounded.
		By the previous lemma, with
$\rho_\id(\state) = K(\featuresTrain{i},\, \state)$, 
	the mapping $\velocityApprox$ is also bounded. 
\end{proof}

With the previous lemma, as long as the basis function is bounded, then the
function $\velocityApprox$ is also bounded. Thus, $k$-NN, Random forest, and
Boosting can also yield a bounded sequence  $(\state(0), \state(1), \state(2),
\ldots)$. In contrast, basis functions associated with DMD are linear, basis
functions associated with 
SINDy are polynomial, and basis functions associated with SVR with polynomial
kernels are
$K(\featuresTrain{i},\,\featuresTrain{j})=(1+[\featuresTrain{i}]^T\featuresTrain{j})^q$,
none of which are bounded.

\section{Numerical experiments}\label{sec:experiments}

To assess the proposed methodology, we consider a large-scale CFD problem with
a high-order discontinuous Galerkin discretization.

\subsection{Test case and discretization}

Our chosen test case is the flow over an extruded half-cylinder at Reynolds
number $\Reynolds = 350$ and an effectively incompressible Mach number of $M =
0.2$.  This case contains several complex flow features, including separated
shear layers, turbulent transition, and a fully turbulent wake.  When compared
with the circular cylinder, the dynamics of the half-cylinder are noteworthy
because the point of separation is essentially fixed.
Flow over such a cylinder has been the subject of several numerical and
experimental studies \cite{nakamura1996vortex, kumarasamy1997computation,
santa2005characterization}.

In the present study, the cylinder is taken to have a diameter $D$ along its
major axis.  The domain is taken to be $[-9D, 25D]$, $[-9D, 9D]$, and $[0, \pi
D]$ in the stream-, cross-, and span-wise directions, respectively.  The
cylinder is positioned such that the back surface is centered at $(0,0,0)$.
The stream-wise and cross-wise dimensions are comparable to those used in
studies of the circular cylinder \cite{parnaudeau2008experimental,
witherden2015heterogeneous}.  The domain is periodic in the span-wise
direction.  We remark here that, at this Reynolds number, the dynamics are
strongly influenced by the span-wise extent of the domain.  Indeed, for
extrusions less than ${\sim}\pi D/2$, the wake is coherent.  We apply a no-slip
adiabatic wall boundary condition at the surface of the cylinder
and Riemann invariant boundary conditions at the far-field.
To non-dimensionalize the system, we take the cylinder diameter to be $D = 1$,
the free-stream density to be one, and the free-stream pressure to be one.
The free-stream velocity is thus $v_{\infty} = \sqrt{\gamma} M \simeq 0.234$,
where $\gamma = 1.4$ is the ratio of specific heats.

We mesh the domain using quadratically curved hexahedral elements.  In total
the mesh has $\nelements=40\,584$ elements.  An overview of the mesh can be
seen in Figure \ref{fig:mesh}.
\begin{figure}
 \centering
 \includegraphics[width=8cm]{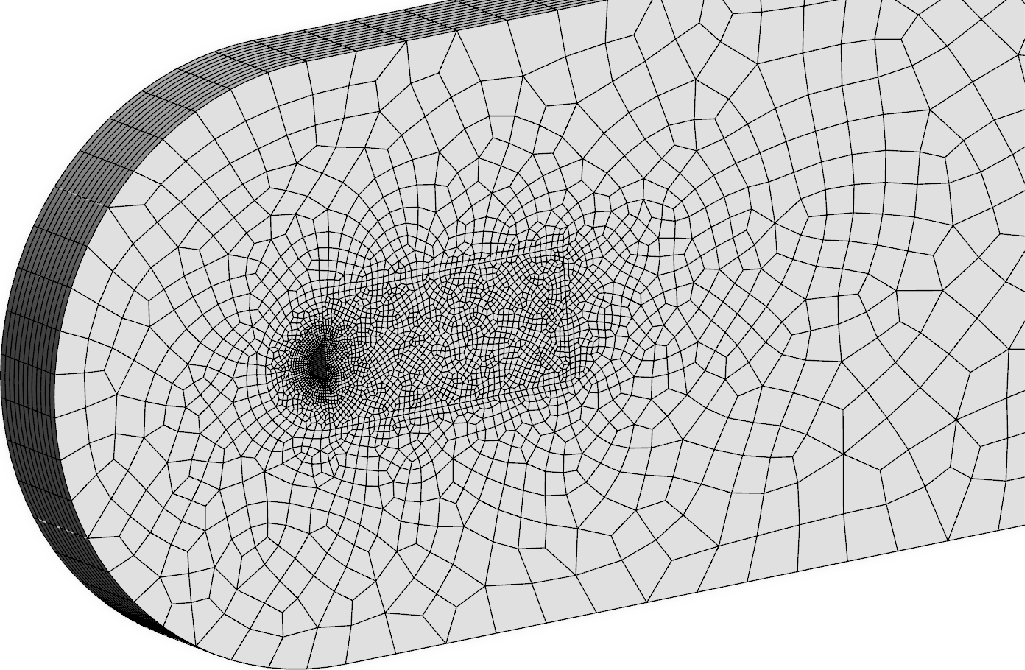}
 \caption{\label{fig:mesh}The half-cylinder mesh characterized by $40\,584$ elements.}
\end{figure}
We solve the compressible Navier--Stokes equations on this mesh with viscosity
fixed such that the Reynolds number based on the diameter of the cylinder is
$\Reynolds = 350$.  For the spatial discretization, we employ a high-order
nodal DG scheme with third-order solution polynomials and Gauss--Legendre
solution points.  Due to the third-order solution polynomials, the
local state within each element contains the density, $x$-momentum, $y$-momentum, $z$-momentum, and energy at
64 points, leading to $\nstateLocalVector=5 \times 64 = 320$.
Thus, the resulting dimension of the global state is
$\nstateGlobalVector{\approx}13 \times 10^6$.  This implies that saving a given
solution snapshot to disk consumes ${\sim}{100}\text{MiB}$ of storage
using double precision arithmetic. 
We calculate inviscid fluxes between
elements using a Rusanov-type Riemann solver, and we take the Prandtl number
 to be $\Prandtl = 0.72$.  For time integration,  we employ the
explicit five-stage fourth-order RK45[2R+] scheme of Carpenter and Kennedy
\cite{kennedy1999low}, which manages the local temporal error by utilizing a
PI type controller to adaptively modify the time step.

We start the simulation cold with a constant initial condition corresponding to the free-stream at $t = 0$ and terminate the simulation at $t = 600$.  This
corresponds to approximately $71$ stream-wise passes over the half-cylinder.
Our time grid of interest comprises $t \in \{170.05 + 0.05i\}_{i=1}^\ntime$
with $\ntime=8\,600$ total time instances. Thus, the desired sequence of states
$
	\{\stateGlobalVectorArg{n}\}_{n=0}^\ntime\subset\RR{\nstateGlobalVector}
	$
	corresponds to the CFD solution at these points in time.

\subsection{Dimensionality Reduction}

As described in Section \ref{sec:dimRed}, the first stage of our proposed
methodology is dimensionality reduction, which proceeds in two steps: local
compression using autoencoders, and global compression using PCA.

\subsubsection{Local compression using autoencoders}

\begin{sidewaysfigure}[p]
\centering
    \includestandalone[mode=image, width=0.98\linewidth]{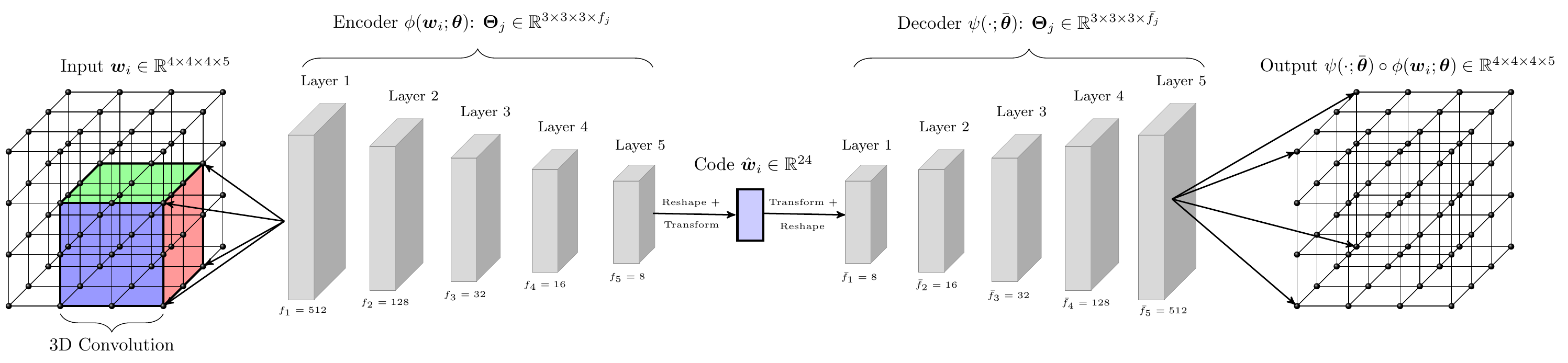}
    \caption{Illustration of the network architecture. In the encoder, the input is transformed by a series of convolutional layers with a decreasing number of filters at each layer. The code is obtained by reshaping the output of the final encoder layer into a vector and performing an affine transformation. The decoder outputs a reconstruction of the original input $w_i$ by performing the inverse of all operations performed by the encoder.}
    \label{fig:network}
\end{sidewaysfigure}

Because convolutional neural networks were originally devised for image data, they
require the local degrees of freedom (i.e., the five conserved variables) to be
equispaced within each element.  However, within the simulation, these
variables are defined on the tensor product of a 4-node
Gauss--Legendre quadrature rule in order to minimize aliasing errors when
projecting the non-linear flux into the polynomial space of the solution.
Thus, before applying the autoencoder, we transform the solution by
interpolating it from these Gauss--Legendre points to a $4\times 4\times
4$
grid of equispaced points. The dimension of the network inputs is therefore $4 \times 4 \times 4 \times 5$, where the first three dimensions correspond to
spatial dimensions and the last dimension corresponds to the flow quantities.
We do not account for the fact that the mesh elements can vary in size, and
hence the spacing between points is not uniform across all training examples.
Since solution values at both sets of points define the same polynomial
interpolant, the solution inside of each element is unchanged by this
transformation.

For the autoencoder, we apply convolutional layers with ResNet skip
connections, as they have previously been shown to achieve state-of-the-art
performance on image classification and related tasks~\cite{he2016resnet}.
Between convolutional layers, we apply batch normalization~\cite{ioffe2015batch} and ReLU activations~\cite{nair2010rectified}.
The encoder network has $\nlayersEncoder= 5$, where each layer contains parameters associated with a set of three-dimensional convolutional filters $\weightNNEncoderarg{i} \in \R^{3 \times 3 \times 3 \times f_i}$, and $f_i$ denotes the number of filters contained in layer $i$.  There are \num{512},
\num{128}, \num{32}, \num{16}, and \num{8} filters in the five layers of the encoder. 
\reviewer{This architecture is representative of a standard autoencoder~\cite{hinton2006reducing}, where the dimensionality of the transformed input is initially large to allow for adequate feature extraction, and is subsequently decreased gradually through the remaining hidden layers of the encoder.
While we found that this autoencoder architecture allowed for a high level of compression and reconstruction accuracy, it is possible that other autoencoder architectures may achieve satisfactory performance.}
In all layers, the size of the final output dimension is equal to $f_i$.
In the first four layers of the encoder, the convolutions are performed with a stride of one, meaning that the first three dimensions of the layer output $\nnFuncEncoderArg{i}(\mathbf{y}_{i-1};\weightNNEncoderarg{i})$ are equal in size to the first three dimensions of the layer input $\mathbf{y}_{i-1}$.  In the final layer of the encoder, the convolutions are performed with a stride of two, meaning that the first three dimensions of the layer output $\nnFuncEncoderArg{5}(\mathbf{y}_{4};\weightNNEncoderarg{5})$ are half the size of the first three dimensions of the layer input $\mathbf{y}_{4}$.
Given all of these transformations, we have $\nNeuronEncoder{0} = 320$,
$\nNeuronEncoder{1} = 32\,768$, $\nNeuronEncoder{2} = 8\,192$,
$\nNeuronEncoder{3} = 2\,048$, $\nNeuronEncoder{4} = 1\,024$, and $\nNeuronEncoder{5} = 64$.

The output of the final convolutional layer is reshaped into a vector and
subsequently mapped to encoding $\stateLocalVectorRed{i}$ with an affine
transformation.  We set the reduced dimension of the local state vector (i.e.,
the code dimensionality) to $\nstateLocalVectorRed = 24$.  We then use an
affine transformation and reshaping operation to create an input to the
decoder that is the same size as the output of the encoder.  We construct the
decoder network to invert all operations performed by the encoder in order to
obtain reconstructed solutions; this implies that $\nlayersDecoder=5$,
$\nNeuronDecoder{0} = 64$, $\nNeuronDecoder{1} = 1\,024$, $\nNeuronDecoder{2}
= 2\,048$, $\nNeuronDecoder{3} = 8\,192$, $\nNeuronDecoder{4} = 32\,768$, and
$\nNeuronDecoder{5} = 320$. Figure \ref{fig:network} provides an illustration
of the network architecture.

The training data for the autoencoder consists of the 90 randomly selected
values of the set of time instances $\{86i\}_{i=1}^{100}$ such that
$\card{\timeDomainTrainAuto}=90$; the remaining 10 points of the set comprise
validation data that is used to check for overfitting.
To solve the minimization problem \eqref{eq:autoencoderOpt}, we employ
stochastic gradient descent with the Adam optimizer~\cite{kingma2014adam}.
We apply $L_2$ regularization to
the network weights
\begin{equation}
\regularization(\weightEncoder,\weightDecoder) = \lambda \left(\|\weightEncoder\|_2^2 + \|\weightDecoder\|_2^2   \right),
\end{equation}
where $\lambda$ is a hyperparameter that controls the level of regularization. 
When an increase in the validation error is observed across training epochs,
the learning rate is cut in half.  Training is terminated once the loss on the
validation set fails to decrease across three training epochs, which typically
occurs after the learning rate has been decayed many times.

\subsubsection{Global compression using PCA}\label{sec:PCAexperiments}

We apply PCA for global compression as described in Section \ref{sec:PCA}
using training time instances\footnote{Note that PCA is robust with respect to
missing data points; thus, we expect the principal components computed from
this set to be close to those computed on a smaller training set.} of
$\timeDomainTrainPCA= \timeDomainSample$.
We decompose the global reduced state vector $\stateGlobalVectorRed$ into
$\nPCA=38$ components, each of which is characterized by 24 codes and $1\,068$
cells such that $\stateGlobalVectorRedArg{i}\in\RR{25\,632}$,
$i=1,\ldots,38$. We truncate the component principal components such that
$\nstateArg{i}=500$, $i=1,\ldots,38$.

Figure \ref{fig:svals} reports the decay of the first 500 (of $8\,600$) singular
values (i.e., the first 500 diagonal elements of $\Sing$ in
Eq.~\eqref{eq:SVDGlobal}). In this case, the first 100 principal components
capture approximately 95\% of the statistical energy (as measured by the sum
of squares of all $8\,600$ singular values) in the data, while the first 500
principal components capture approximately 98\% of the statistical energy. We first
set $\nstate=100$.
\begin{figure}[t]
\centering
    \includegraphics[width=0.5\linewidth]{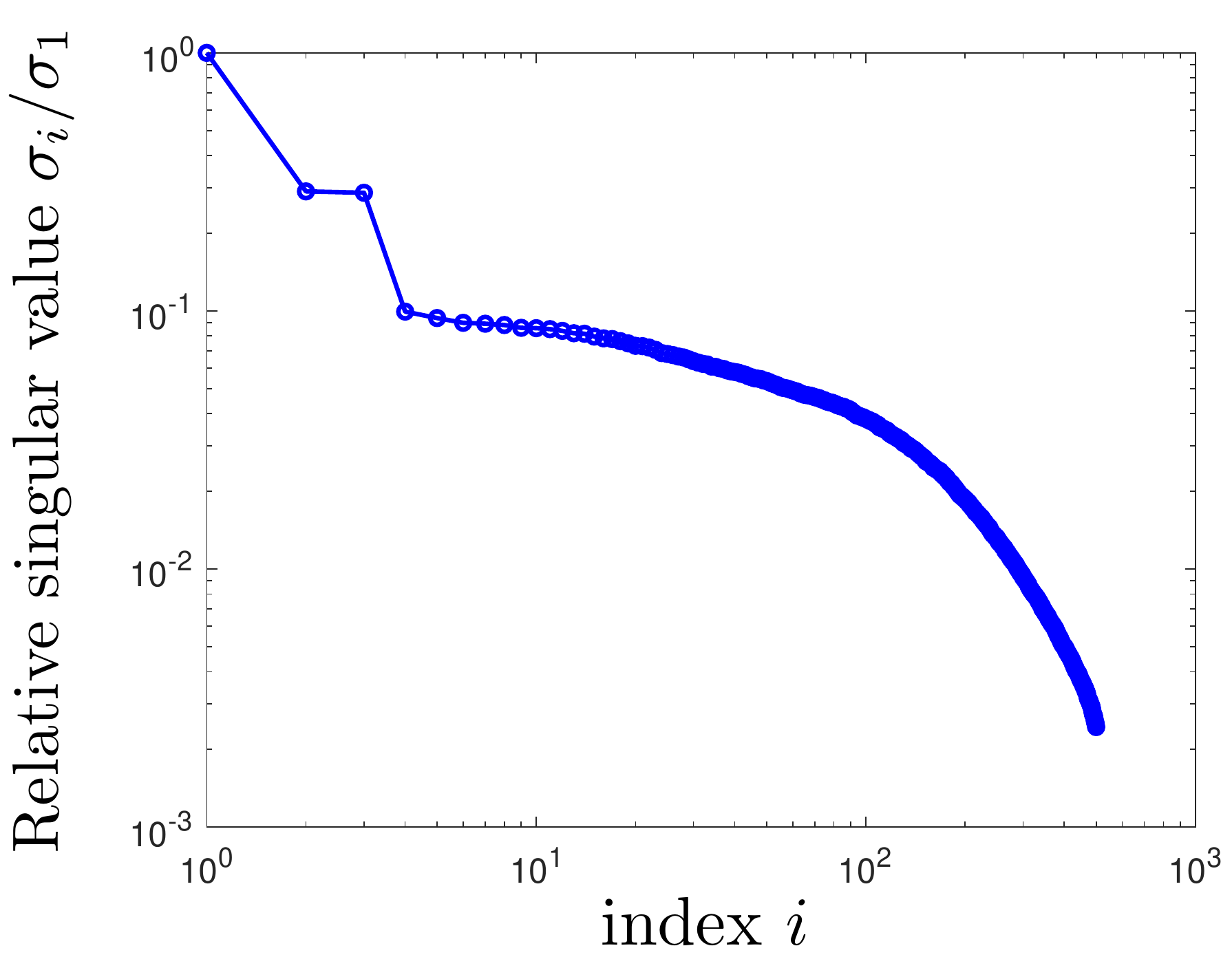}
			\caption{Decay of the \reviewer{singular values}. Numerical computation shows that a basis dimension of
			$\nstate=100$ captures approximately 95\% of the statistical energy
			in the data, and  a basis dimension of
			$\nstate=500$ captures approximately 98\% of the statistical energy.}
  \label{fig:svals}
\end{figure} 

\subsection{Dynamics learning}

We now apply the approach described in
Section \ref{sec:regressionDiscrete} for dynamics learning. We employ a maximum set of training time
instances $\timeDomainTrainRegression$ with
$\card{\timeDomainTrainRegression}= 6 \,450$. 
Before applying regression, we standardize the features by applying an affine
transformation (i.e., we subtract the sample mean and divide by the sample
standard deviation).

\subsubsection{Hyperparameter selection using
validation}\label{sec:hyperparamSelect}

We now present validation results for the regression methods with
tunable hyperparameters, i.e., SVR2, SVR3, SVRrbf, random forest, boosting, $k$-NN, and VKOGA.
We set the validation set to be the $2\,150$ time instances not
included in the maximum training set.

Figure \ref{fig:validation} reports these results, wherein the models are
trained using the maximum training set  $\timeDomainTrainRegression$. 
Here, the relative mean-squared error (MSE) over a set of time indices
$\timeDomainGen\subseteq\{1,\ldots,\ntime\}$ is defined
as
\begin{equation}\label{eq:relativeMSE}
\frac{\sum_{n\in\timeDomainGen}\|
\velocityApprox(\stateArg{n})- \stateArg{n+1} \|_2}{
\sum_{n\in\timeDomainGen}\|
\stateArg{n+1}\|_2
}.
\end{equation}
Based on
these results, we select the following hyperparameters: $\epsilon = 10^{-3}$
for SVR2, SVR3, and SVRrbf; $\ntree=30$ for random forest; $\ntree = 100$ for
boosted decision trees; $k=6$ for $k$-nearest neighbors; and 200 kernel
functions for VKOGA.

\begin{figure}[htbp] 
    \begin{subfigure}[b]{0.33 \linewidth}
  \centering
    \includegraphics[width=\linewidth]{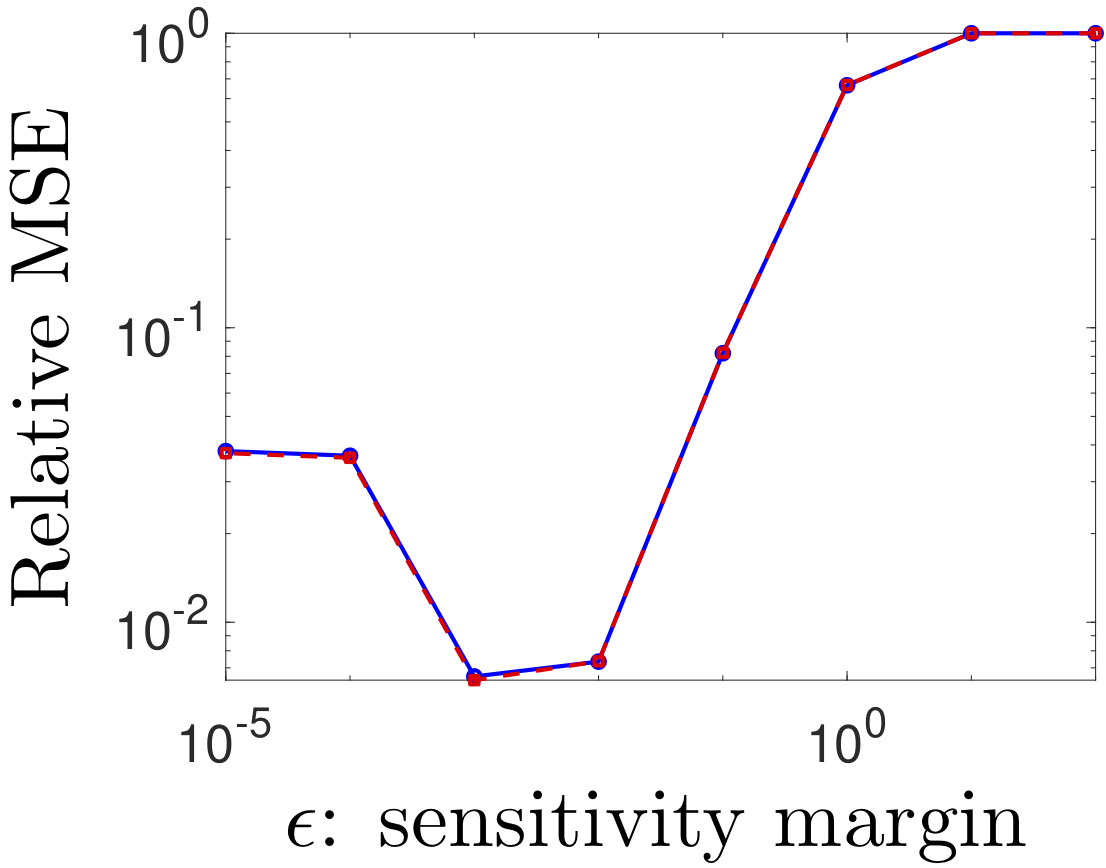}
			\caption{SVR2}
  \end{subfigure}
    \begin{subfigure}[b]{0.33 \linewidth}
  \centering
    \includegraphics[width=\linewidth]{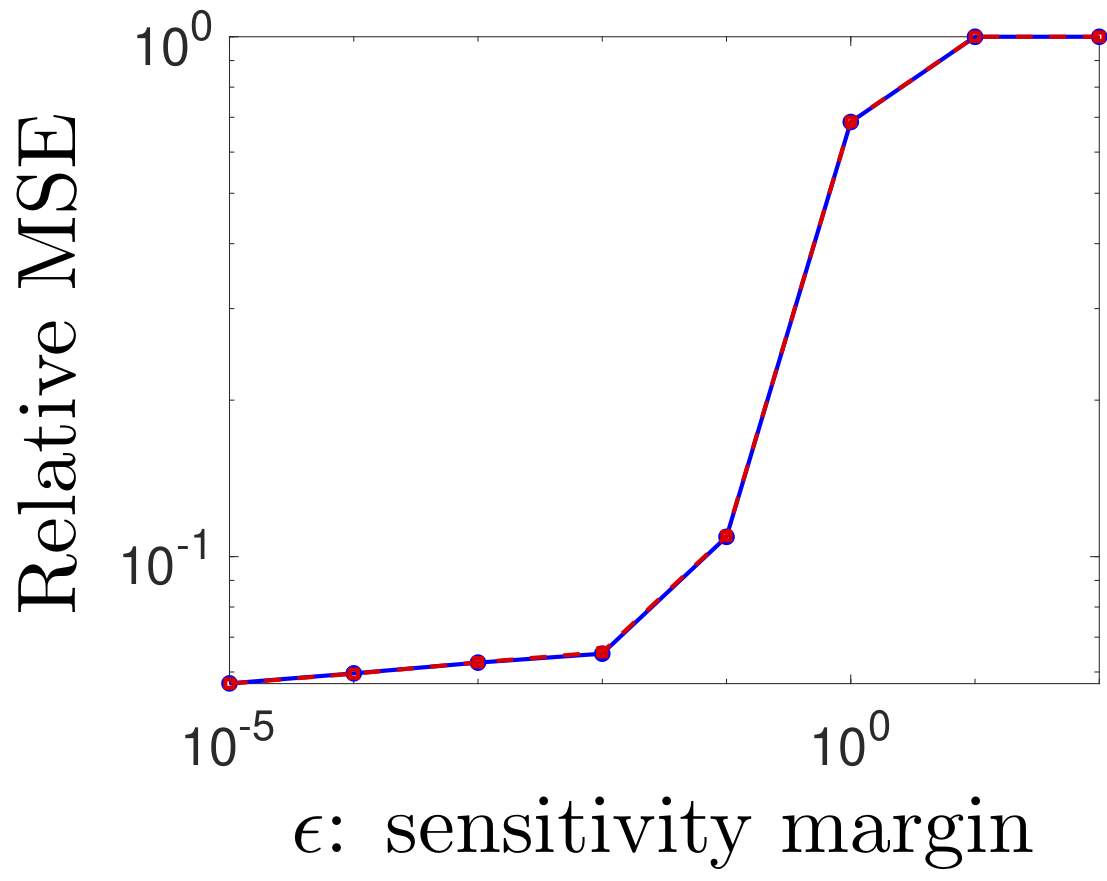}
			\caption{SVR3}
  \end{subfigure}
    \begin{subfigure}[b]{0.33 \linewidth}
  \centering
    \includegraphics[width=\linewidth]{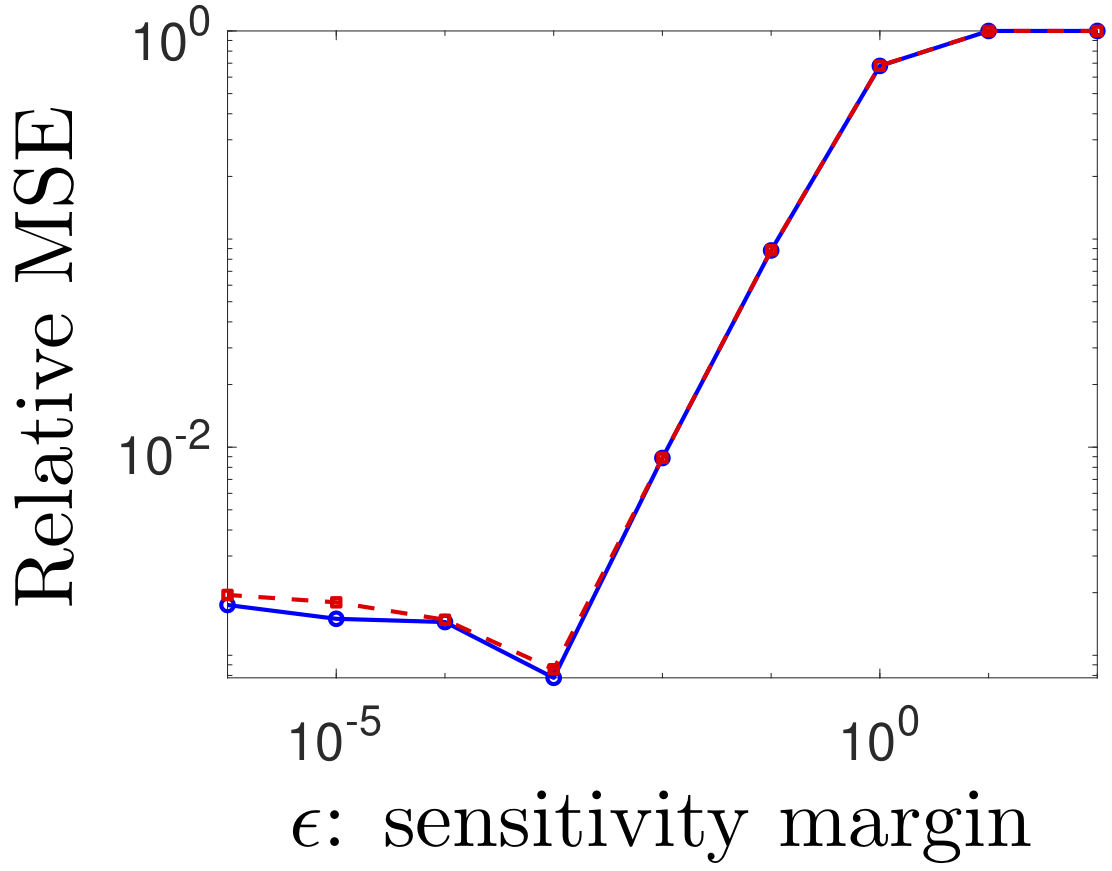}
			\caption{SVRrbf}
  \end{subfigure}
    \begin{subfigure}[b]{0.33 \linewidth}
  \centering
    \includegraphics[width=\linewidth]{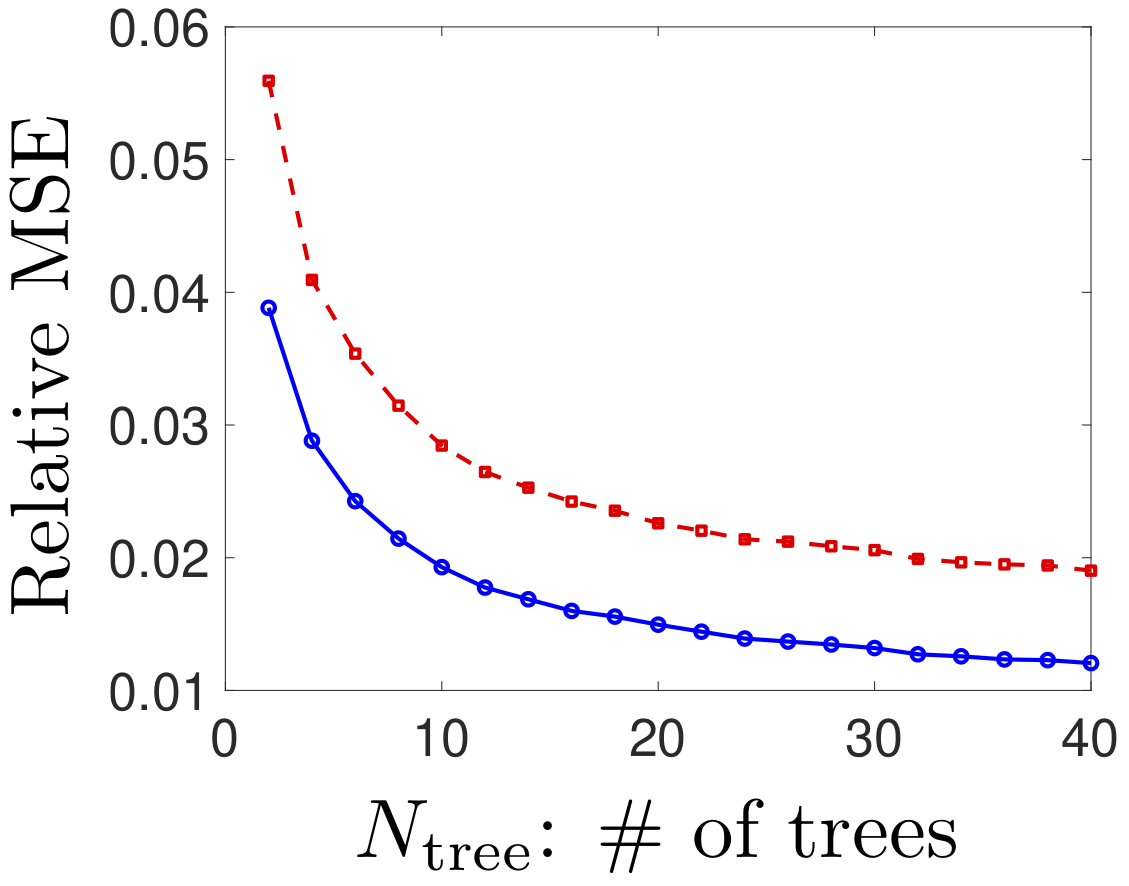}
			\caption{Random Forest}
  \end{subfigure}
    \begin{subfigure}[b]{0.33 \linewidth}
  \centering
    \includegraphics[width=\linewidth]{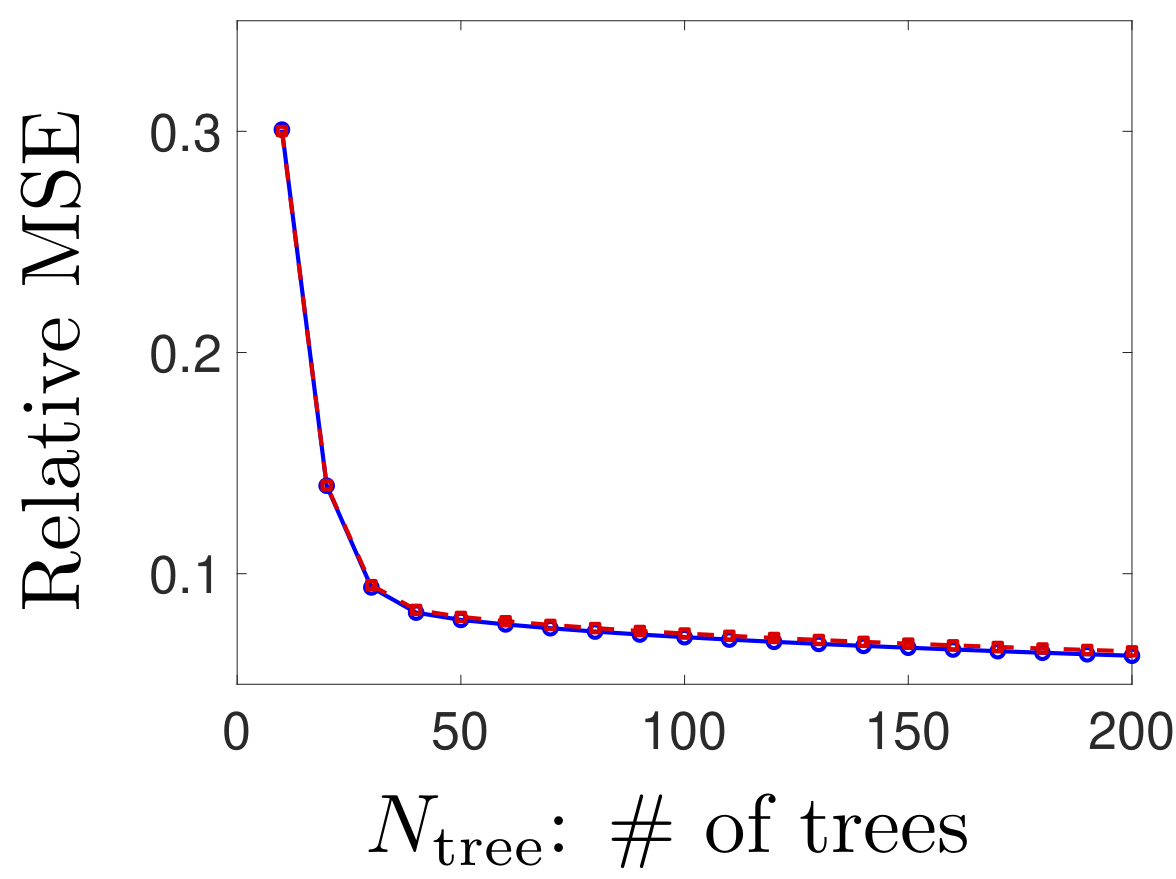}
			\caption{Boosted decision trees}
  \end{subfigure}
    \begin{subfigure}[b]{0.33 \linewidth}
  \centering
    \includegraphics[width=\linewidth]{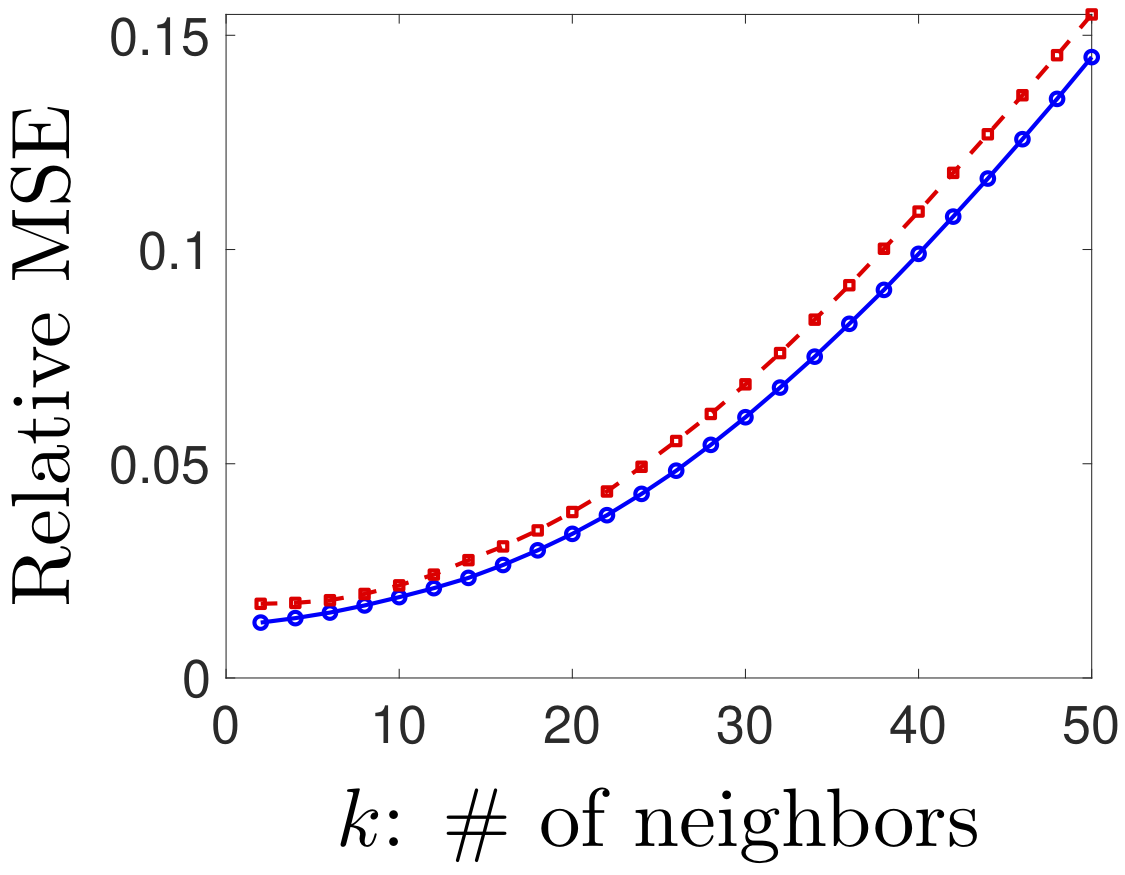}
			\caption{$k$-nearest neighbors}
  \end{subfigure}
    \begin{subfigure}[b]{0.33 \linewidth}
  \centering
    \includegraphics[width=\linewidth]{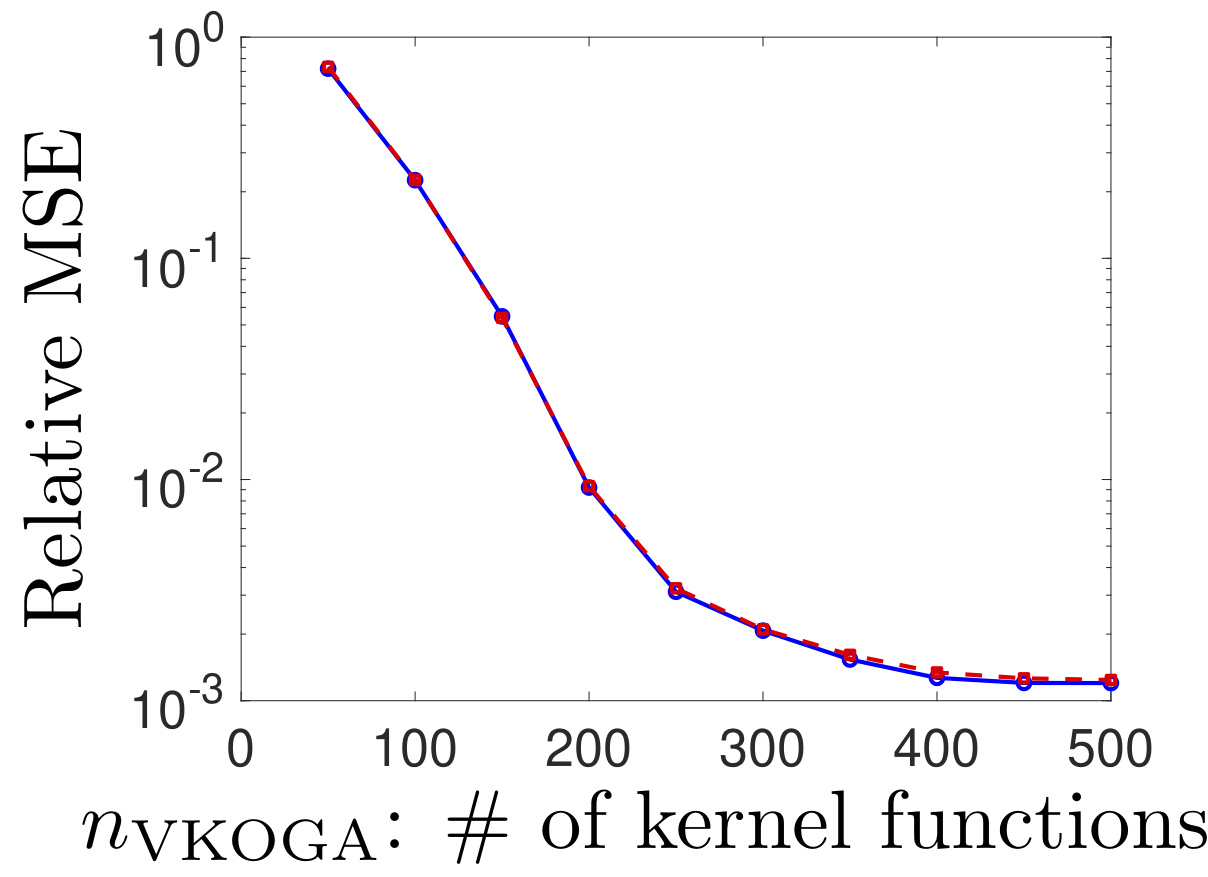}
			\caption{VKOGA}
  \end{subfigure}
\caption{Hyperparameter selection for different regression methods.  Reported
	errors correspond to the relative MSE values on the training set (blue) and
	the validation set (red).  The training set corresponds to the maximum
	training set $\timeDomainTrainRegression$, and the validation set
	corresponds to the remaining $2\,150$ time instances.} 
\label{fig:validation} 
\end{figure} 

\subsubsection{Training and testing error} 

We now analyze the training and testing errors of the various regression
methods, where the hyperparameters have been fixed according to the values
selected using validation in Section \ref{sec:hyperparamSelect}.
Figure
\ref{fig:trainingTestingError} reports these results. Here, three different
training sets are considered to construct the regression models: one with the maximum 
training set $\timeDomainTrainRegression$ with
$\card{\timeDomainTrainRegression} = 6\,450$ (yellow bars), one with 75\% of
the maximum training
data selected at random (red bars), and one with 50\% of the maximum
training data selected at random (blue bars).
The reported errors correspond to the relative MSE as defined in
Eq.~\eqref{eq:relativeMSE}.

\begin{figure}[H]
	\begin{center}
    \begin{subfigure}[b]{0.48 \linewidth}
  \centering
		\includegraphics[width=\textwidth]{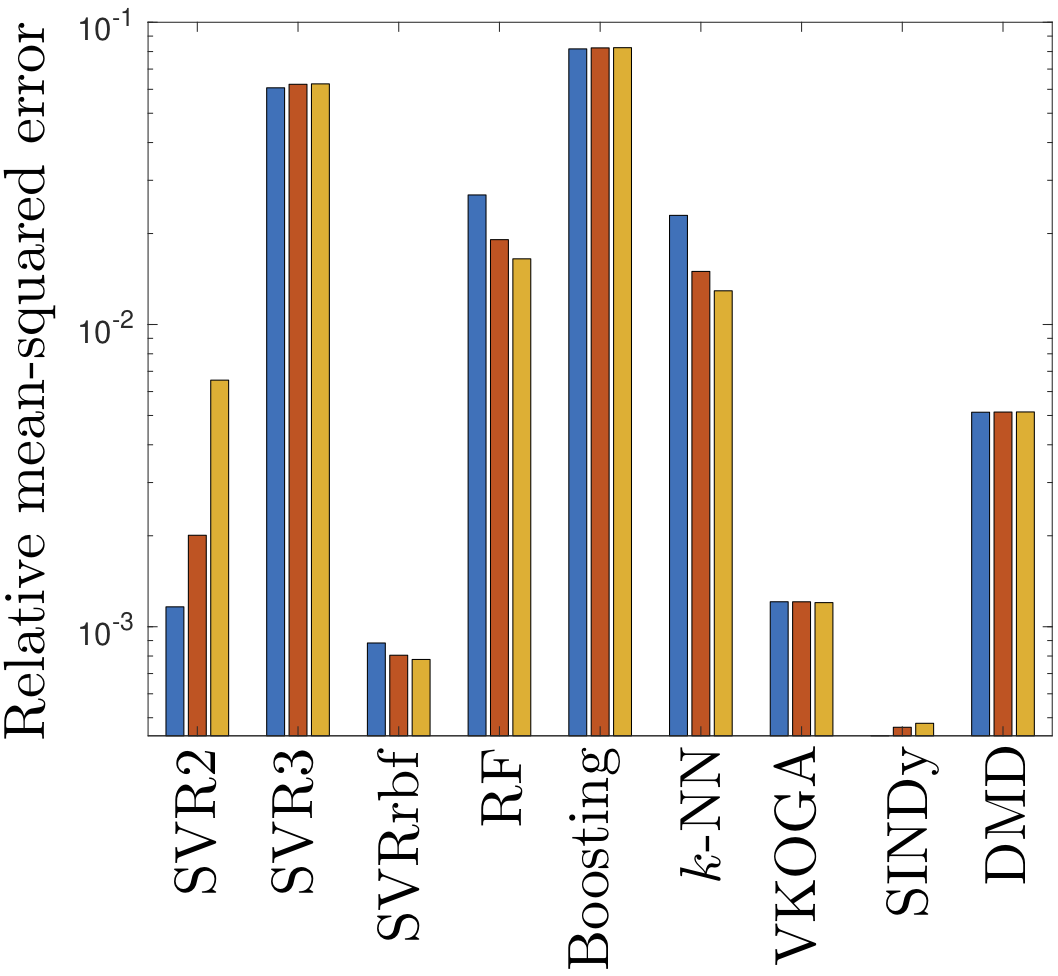}
			\caption{Training error}
  \end{subfigure}
    \begin{subfigure}[b]{0.48 \linewidth}
  \centering
		\includegraphics[width=\textwidth]{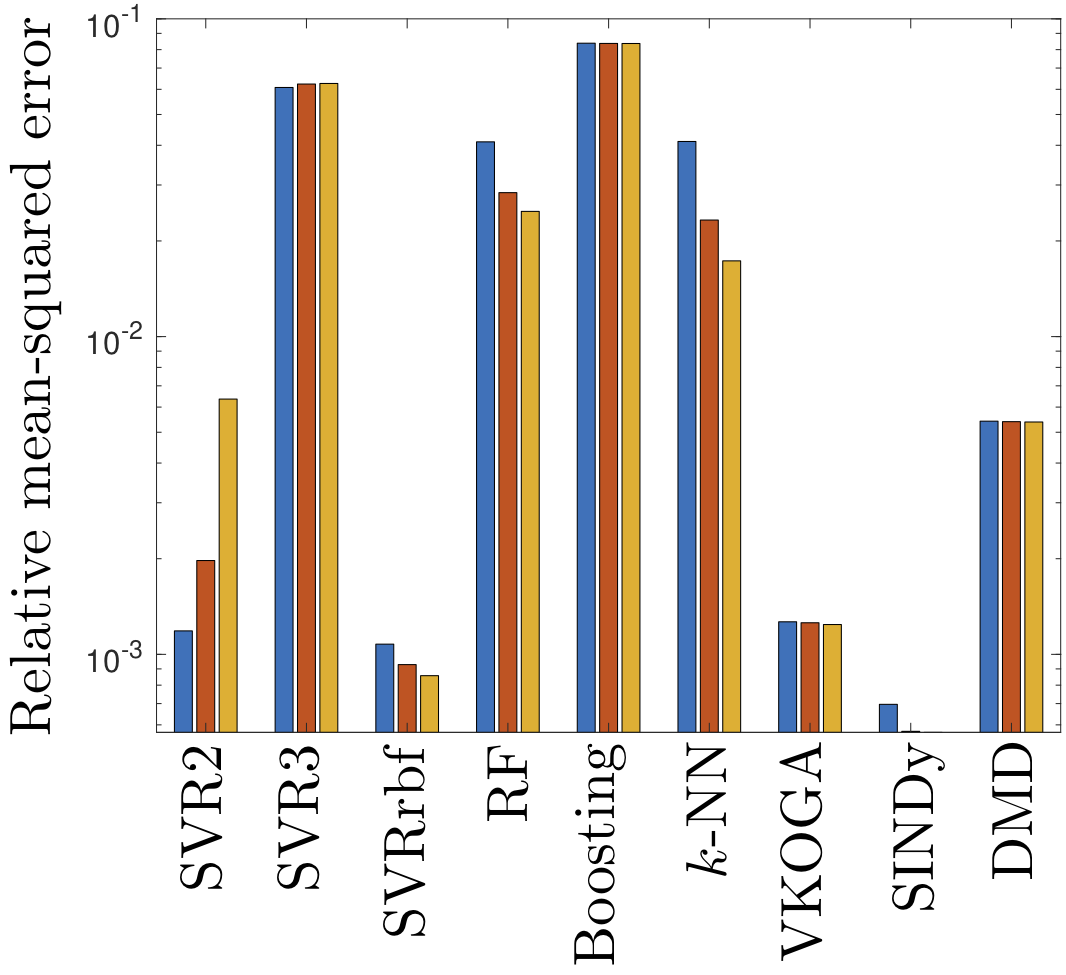}
			\caption{Testing error}
  \end{subfigure}
		\caption{Training and testing relative MSE errors for all regression methods. We
		consider models constructed using three training sets: one with the
		maximum training set $\timeDomainTrainRegression$ with
		$\card{\timeDomainTrainRegression} = 6\,450$ (yellow bars), one with 75\%
		of the maximum training data selected at random (\reviewer{red} bars), and one with
		50\% of the maximum training data selected at random (blue bars).}
	\label{fig:trainingTestingError}
	\end{center}
\end{figure}

First, note that the training and testing errors are extremely similar; this
suggests that the training data are indeed representative of the testing data,
indicating that we have not overfit our models. 

Second, note that even with the smallest training set characterized by only
50\% of the maximum number of considered points training set (i.e., $3\,225$ time
instances), the training and test errors are quite similar to those achieved
with the maximum training set. Indeed, the only regression methods where
doubling the training set yields improvements are  random forest, and $k$-NN.
Even in these cases, the additional training data provide modest gains. In the
case of SVR2, including additional training data actually \textit{degrades}
both the training and testing errors. This likely occurs because the SVR2
model is trained using the hinge loss (see Eq.~\eqref{eq:SVRopt}), which is
different from the relative MSE loss reported here. The hinge loss
does not penalize small errors, which do influence the relative MSE.

Third, note that the smallest errors (relative MSE on the test set $<1\times
10^{-3}$) are achieved by SVRrbf, VKOGA, and SINDy. We expect these
techniques to perform best when attempting to recover the desired sequence of
states by simulating the approximated discrete-time dynamics
\eqref{eq:discTimeApprox}. However, as discussed in Section
\ref{sec:boundedness}, only VKOGA and SVRrbf  are bounded; SINDy admits
unbounded dynamics.

\subsubsection{Approximated dynamics} 

We now turn to the final stage of the method: computing the sequence of
low-dimensional approximated states
$\{\stateGlobalVectorApproxArg{n}\}_{n=0}^\ntime$ according to approximated
discrete-time dynamics \eqref{eq:discTimeApprox}, and subsequently computing
the approximation to the desired sequence of states, i.e.,
$\{\stateGlobalVectorApproxArg{n}\}_{n=0}^\ntime$, where
$\stateGlobalVectorApproxArg{0} = \stateGlobalVectorArg{0}$ and
$\stateGlobalVectorApproxArg{n} = \redToFull(\stateApproxArg{n})$,
$n=1,\ldots,\ntime$. 

Figure \ref{fig:SimRandom8600} reports these results. For each
regression model (constructed with the maximum training set
$\timeDomainTrainRegression $), this figure reports the time-instantaneous
relative error 
$\|\stateApproxArg{n}-\stateArg{n}\|_2/\|\stateArg{n}\|_2$
over all time instances, i.e., for $n=0,\ldots,\ntime$. These
results show that both VKOGA  and SVRrbf yield accurate responses over time, as
the time-instantaneous relative error is less than 5\% and 3\% for these
models, respectively, at all time instances. These results are sensible, as
their training and testing errors were also small, as reported in Figure
\ref{fig:trainingTestingError}. Somewhat surprisingly, note that
SINDy---despite yielding small training and testing errors in Figure
\ref{fig:trainingTestingError}---produces an unstable response when integrated
into the approximated discrete-time dynamics \eqref{eq:discTimeApprox}. This
can be explained from the boundedness analysis in Section
\ref{sec:boundedness}: the polynomial basis functions employed with the SINDy
method are unbounded. Thus, this approach does not ensure a bounded
sequence of approximated states, and---in this case---produces unstable
discrete-time dynamics as a result. This highlights that the stability
properties of the ultimate dynamical system should play an important role in
deciding on the appropriate regression model; low training and testing errors
are not sufficient for an accurate model in this context.

				\begin{figure}[H]
					\centering
					\includegraphics[width=0.55\textwidth]{./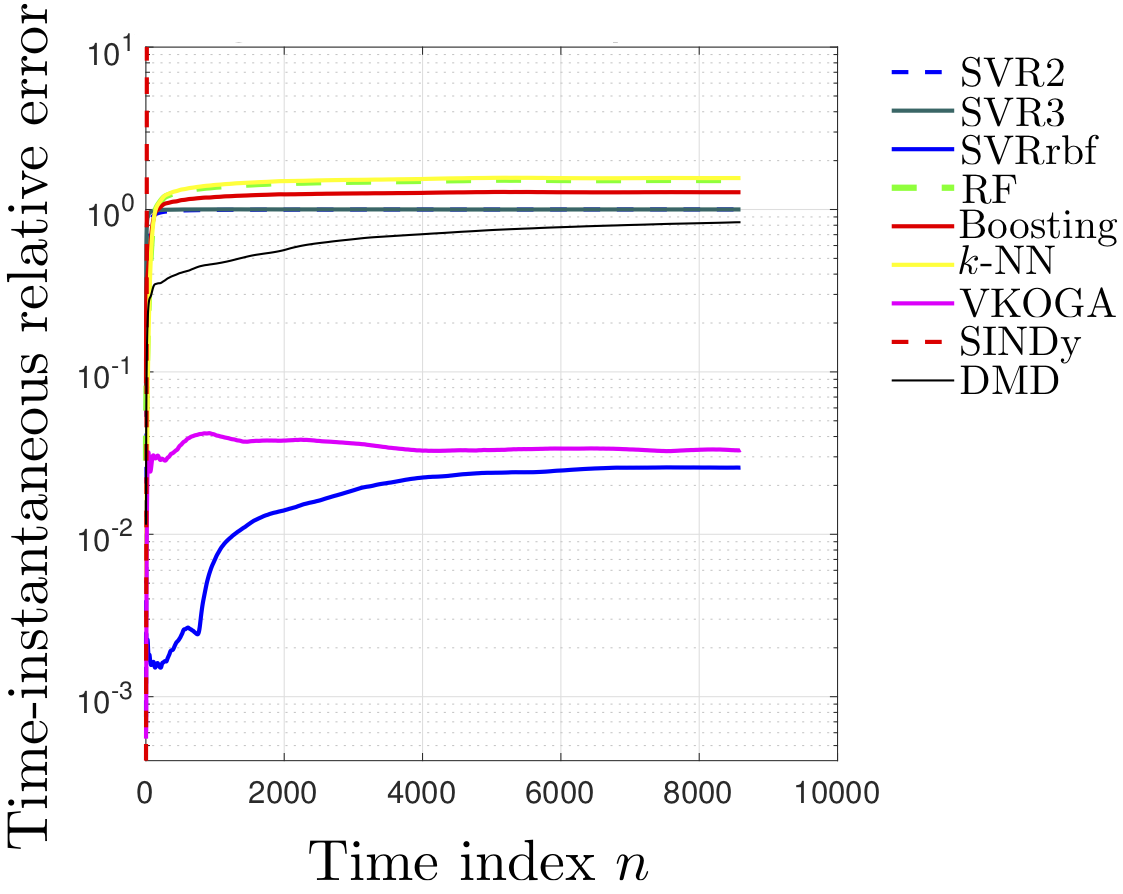}
					\caption{The time-instantaneous relative error 
					$\|\stateApproxArg{n}-\stateArg{n}\|_2/\|\stateArg{n}\|_2$,
					$n=1,\ldots,\ntime$
					over time for each
					considered regression model. Regression models are trained using the
					maximum training set $\timeDomainTrainRegression $. The
					low-dimensional state dimension is $\nstate=100$. Note
					that both SVRrbf and VKOGA yield accurate responses, which are
					consistent with their small training and testing errors. In
					contrast, SINDy produces an unstable response despite its small
					training and testing errors; this is due to the fact that it yields
					an unbounded approximated velocity.}
						\label{fig:SimRandom8600}
				\end{figure}

				We emphasize the promise of these results: the original sequence of
				data---wherein each vector has dimension
				$\nstateGlobalVector{\approx}13 \times 10^6$---has been accurately
				approximated as dynamical system of dimension $\nstate=100$, which
				corresponds to a compression ratio of $130\,000:1$. This implies
				that---with access only to the initial state
				$\stateGlobalVectorArg{n}$, the proposed method can recover the
				desired sequence of states to within 5\% accuracy in the
				low-dimensional state with knowledge of only the restriction operator
				$\fullToRed$ and approximated velocity $\velocityApprox$.

			We now consider increasing the dimension of the low-dimensional state to
			gauge the effect of accuracy on this parameter. To achieve this, after
			computing the principal components as described in Section
			\ref{sec:PCAexperiments}, we now preserve the first $\nstate=500$
			principal components, yielding a compression ratio of $26\,000:1$
			wherein we have captured 98\% of the statistical energy.
			Rather than repeat the experiment for all considered regression
			methods, we now only consider the VKOGA method, as it (along with
			SVRrbf) yielded the best results in the case of $\nstate=100$. Figure
			\ref{fig:SimRandom8600_k500} reports these results; comparing this
			figure with Figure \ref{fig:SimRandom8600} shows that increasing the
			dimension of the low-dimensional state from $\nstate=100$ to
			$\nstate=500$ 
			\reviewer{(and thus increasing the captured statistical energy in the
			global PCA step from 95\% to 98\%)}
			has reduced the relative error by nearly an order of
			magnitude. The response associated with $\nstate=500$ now exhibits
			time-instantaneous relative errors below $0.7\%$ over all time.

			\begin{figure}
				\centering
					\includegraphics[width=0.5\textwidth]{./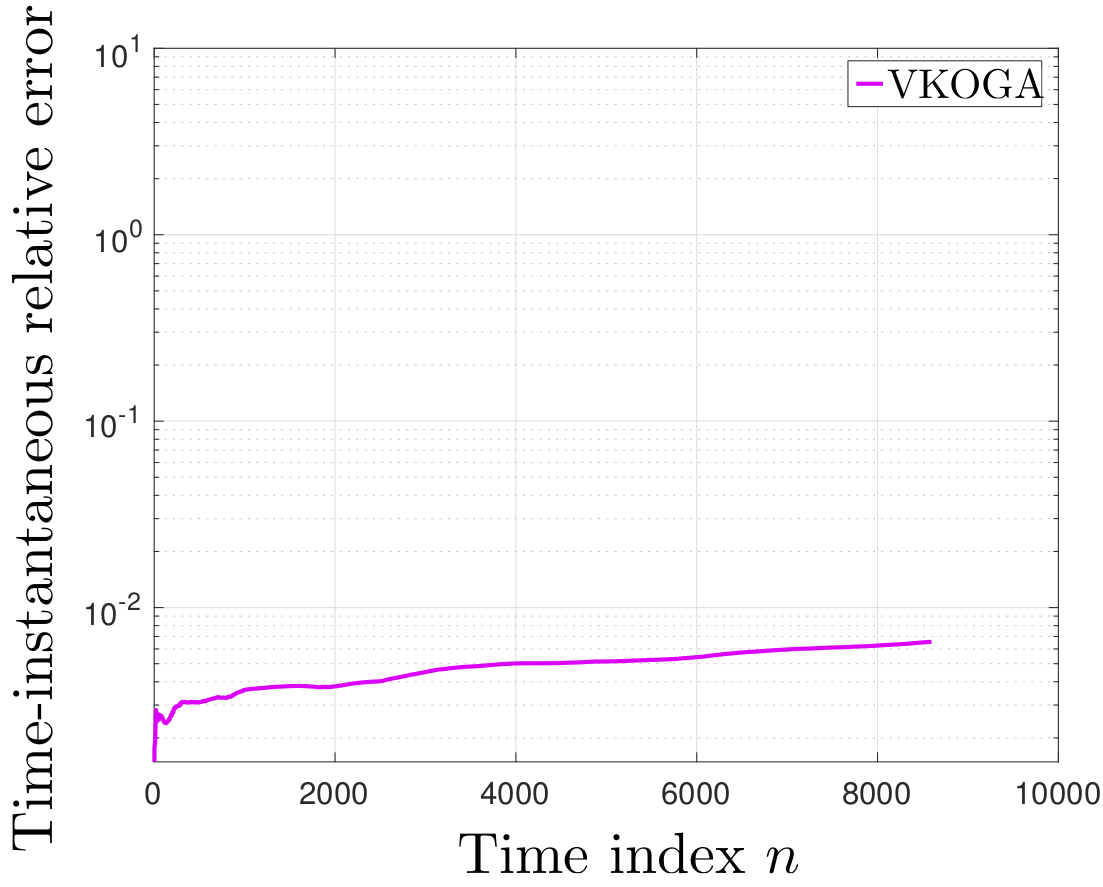}
					\caption{The time-instantaneous relative error 
					$\|\stateApproxArg{n}-\stateArg{n}\|_2/\|\stateArg{n}\|_2$,
					$n=1,\ldots,\ntime$
					for VKOGA
					trained using the
					maximum training set $\timeDomainTrainRegression $. We now employ a
					low-dimensional state dimension of $\nstate=500$. Note
					that the errors have decreased by roughly one order of magnitude
					relative to the case of $\nstate=100$ (compare with Figure \ref{fig:SimRandom8600}).}
						\label{fig:SimRandom8600_k500}
			\end{figure}

\begin{figure}
\centering
    \includestandalone[mode=image, width=0.48\linewidth]{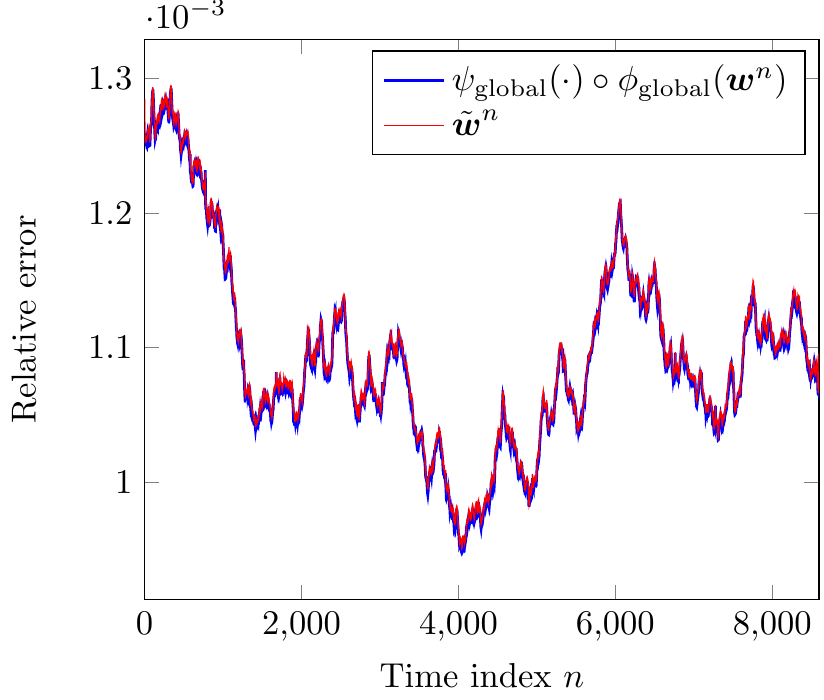}
    \caption{The time-instantaneous relative error for the autoencoder
		alone $\|\stateGlobalVectorArg{n} -
		\decoderFunctionGlobal(\cdot)\circ\encoderFunctionGlobal(\stateGlobalVectorArg{n})\|_1/\|\stateGlobalVectorArg{n}\|_1$,
		$n=1,\ldots,\ntime$,
		as well as for the approximated solutions $\|\stateGlobalVectorArg{n} -
		\stateGlobalVectorApproxArg{n}\|_1/\|\stateGlobalVectorArg{n}\|_1$,
		$n=1,\ldots,\ntime$. The $L_1$ norm is used because numerical roundoff errors cause spikes in time instances when $\|\stateGlobalVectorArg{n}\|_2$ is small.}
    \label{fig:error}
\end{figure}

The above results present the prediction error over time in the reduced state. 
Of greater interest is the prediction error relative to the global state
$\stateGlobalVectorArg{n}$ once the reduced state values are passed through
the prolongation operator $\redToFull$.
The prediction error across all time instances for the approximated solutions $\stateGlobalVectorApproxArg{n}$ can be found in Figure \ref{fig:error}.
To put these results in context, we compare against the error obtained by passing each global state through the encoder and decoder without performing the PCA and dynamics-learning procedures.
We note that the error introduced by compressing and reconstructing the global states through the autoencoder introduces very little error, with a relative error of less than $0.13\%$ across all time instances.
More critically, we can see that the relative error for the approximated states is nearly identical to the error from the autoencoder reconstructions, meaning that very little error is introduced by the PCA and dynamics-learning steps.

Beyond solely considering the global error in reconstructions, we are also interested in whether local properties of the fluid flow are preserved during the various stages of dimensionality reduction, dynamics propagation, and reconstruction.
One manner of determining whether flow properties are preserved is to consider the aggregate lift and drag forces that act on the surface of the half-cylinder over time.
By extracting pressure values on the surface of the cylinder, we can resolve the distribution of forces acting on the cylinder in the downstream and cross-stream directions.
Integrating over the surface of the half-cylinder, we obtain the lift and drag coefficients for the cylinder.
The left column of Figure \ref{fig:lift_drag} reports these lift and drag forces
plotted across all time instances for the CFD solutions
$\stateGlobalVectorArg{n}$, autoencoder reconstructions
$\decoderFunctionGlobal(\cdot)\circ\encoderFunctionGlobal(\stateGlobalVectorArg{n})$,
and approximated solutions $\stateGlobalVectorApproxArg{n}$ for
$n=1,\ldots,\ntime$.
The right column of Figure \ref{fig:lift_drag} reports the error in lift and
drag forces associated with the reconstructed solutions generated by the
autoencoder and the approximated solutions.
The absolute error is provided for lift values rather than the relative error due to some lift values being close to zero.
From these results, we first note that the error in lift and drag introduced by
encoding and decoding solutions with the autoencoder is quite low:
the absolute error is less than $1\e{-3}$ across all time steps for lift
values and the relative error is less than $0.1\%$ for drag values.
Furthermore, we note that the principal component analysis and
dynamics-learning procedures introduce very little additional error in these
quantities beyond what is introduced by the autoencoder.

We now provide several visualizations that lend qualitative insight into the
performance of the method.  Figure \ref{fig:reconstruction} compares instantaneous iso-surfaces of Q-criterion colored by velocity magnitude for the CFD solution, the autoencoder reconstruction, and the approximated solution.
The figure shows that for all considered cases, the autoencoder reconstruction
retains all of the key vortical structures of the CFD solution, albeit with
some noise in the iso-surfaces.  We also see that there is no distinguishable
difference between the autoencoder reconstruction and the approximated
solution.  This further supports our assertion that the principal component
analysis and dynamics-learning procedures introduce very little additional
error.

\begin{figure}[t]
\centering
    \begin{subfigure}[b]{0.48 \linewidth}
  \centering
    \includestandalone[mode=image, width=0.96\linewidth]{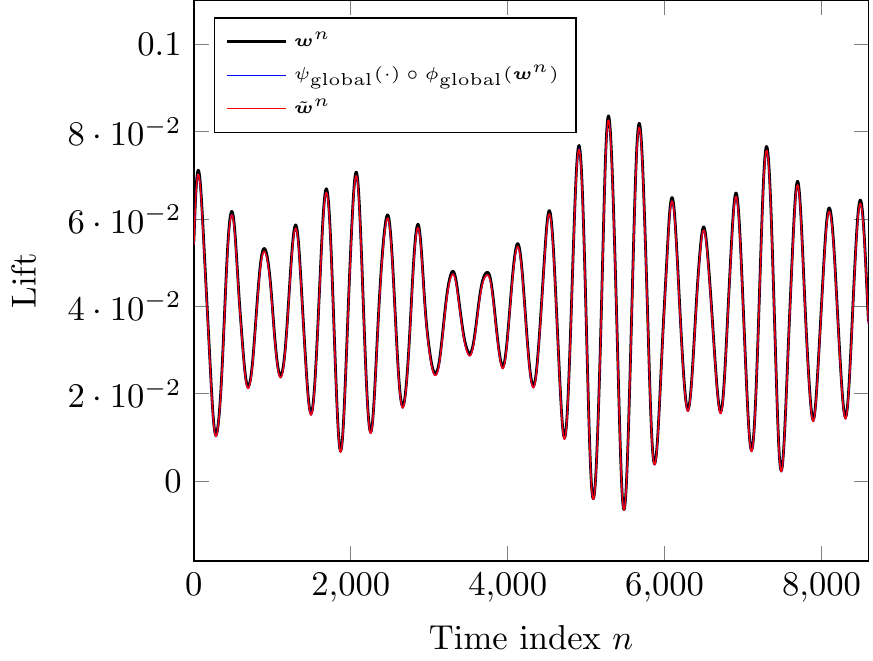}
    \caption{Lift predictions}
    \end{subfigure}
  \hfill
   \begin{subfigure}[b]{0.48 \linewidth}
   \centering
    \includestandalone[mode=image, width=0.9\linewidth]{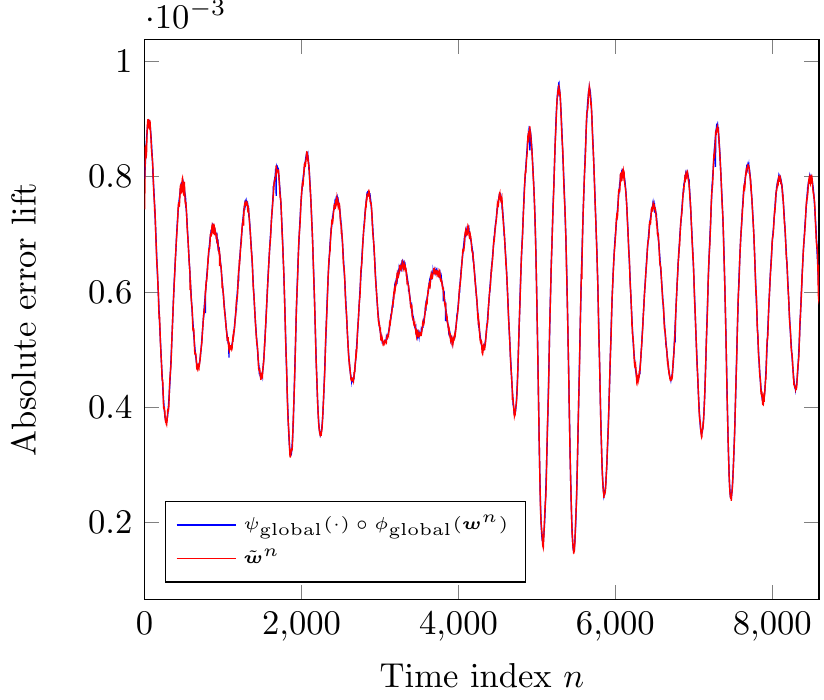}
	\caption{Lift prediction error}
  \end{subfigure}\vspace{1em}

  \begin{subfigure}[b]{0.48 \linewidth}
  \centering
    \includestandalone[mode=image, width=0.9\linewidth]{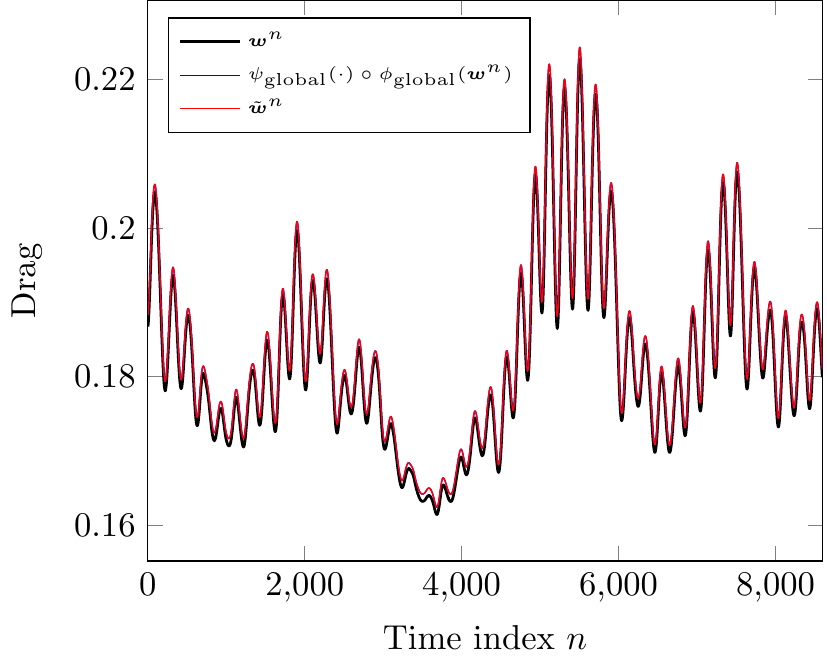}
    \caption{Drag predictions}
    \end{subfigure}
  \hfill
   \begin{subfigure}[b]{0.48 \linewidth}
  \centering
    \includestandalone[mode=image, width=0.9\linewidth]{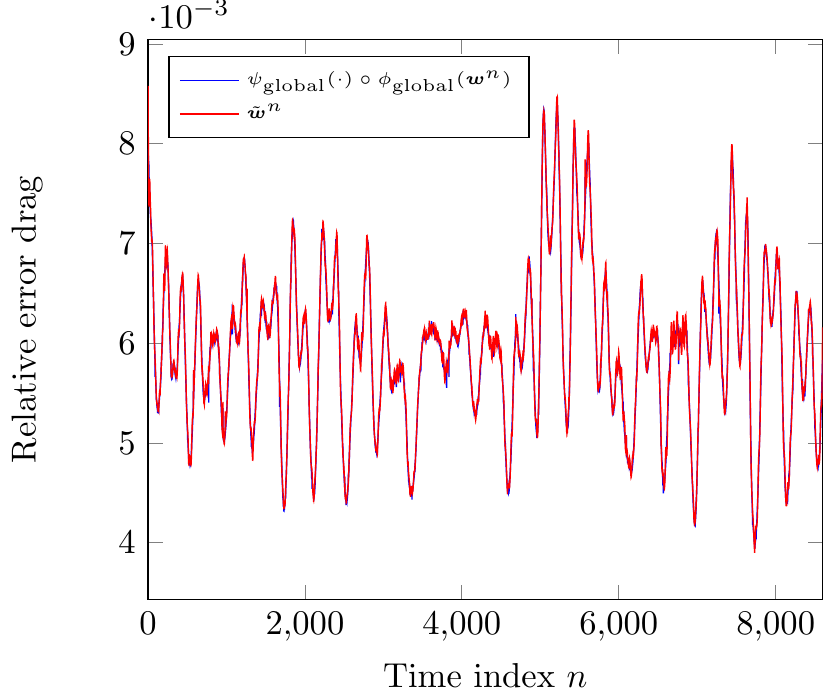}
    \caption{Drag prediction error}
  \end{subfigure}
	\caption{Lift and drag forces on the surface of the cylinder across all time instances. The left column shows lift and drag predictions, while the right column presents the error in lift and drag predictions. Absolute error is used rather than relative error to avoid numerical issues when lift values are close to zero.}
  \label{fig:lift_drag}
\end{figure}

\begin{figure}[t]
\centering
    \begin{subfigure}[b]{0.96 \linewidth}
  \centering
    \includegraphics[width=0.32\linewidth]{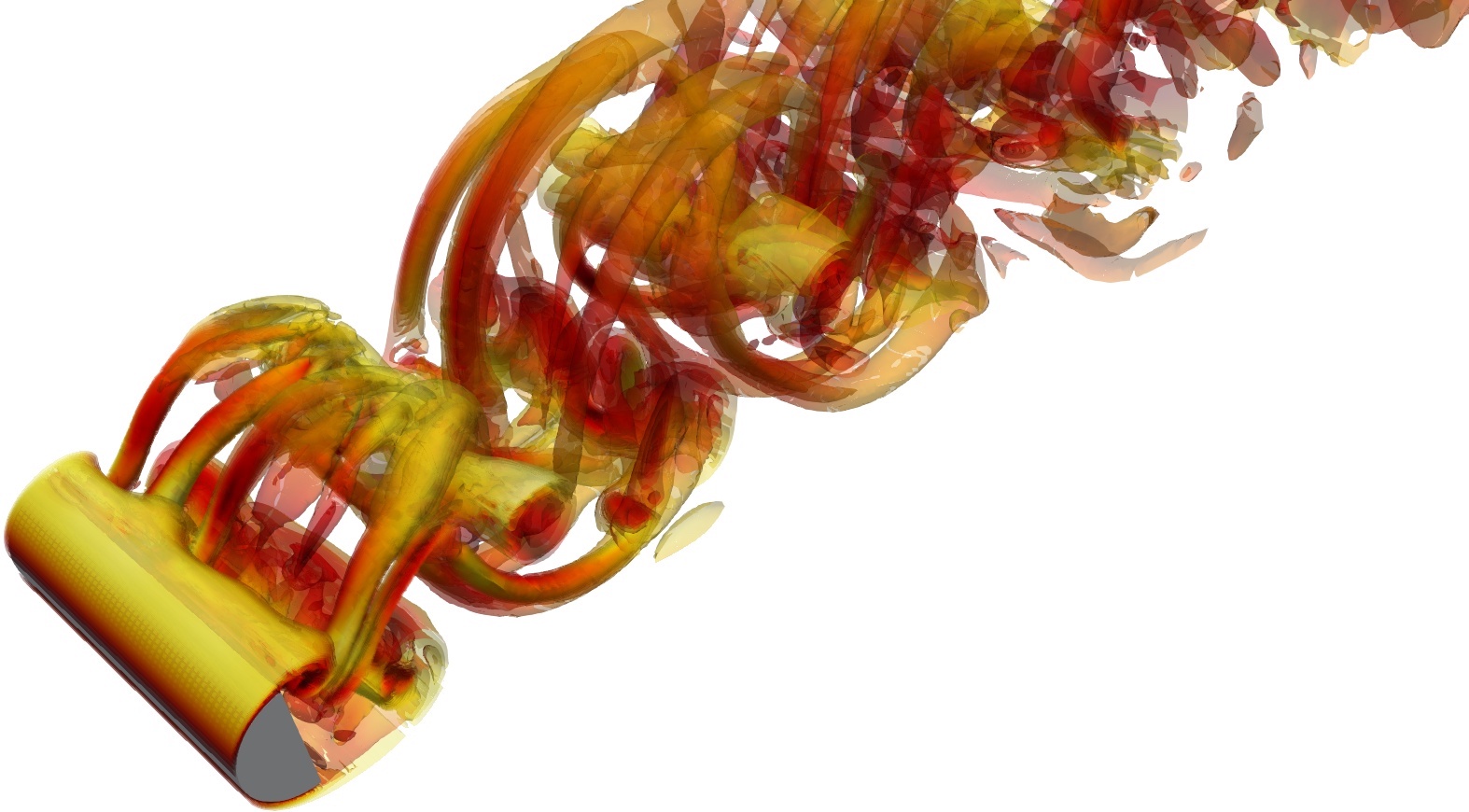}
  \hfill
    \includegraphics[width=0.32\linewidth]{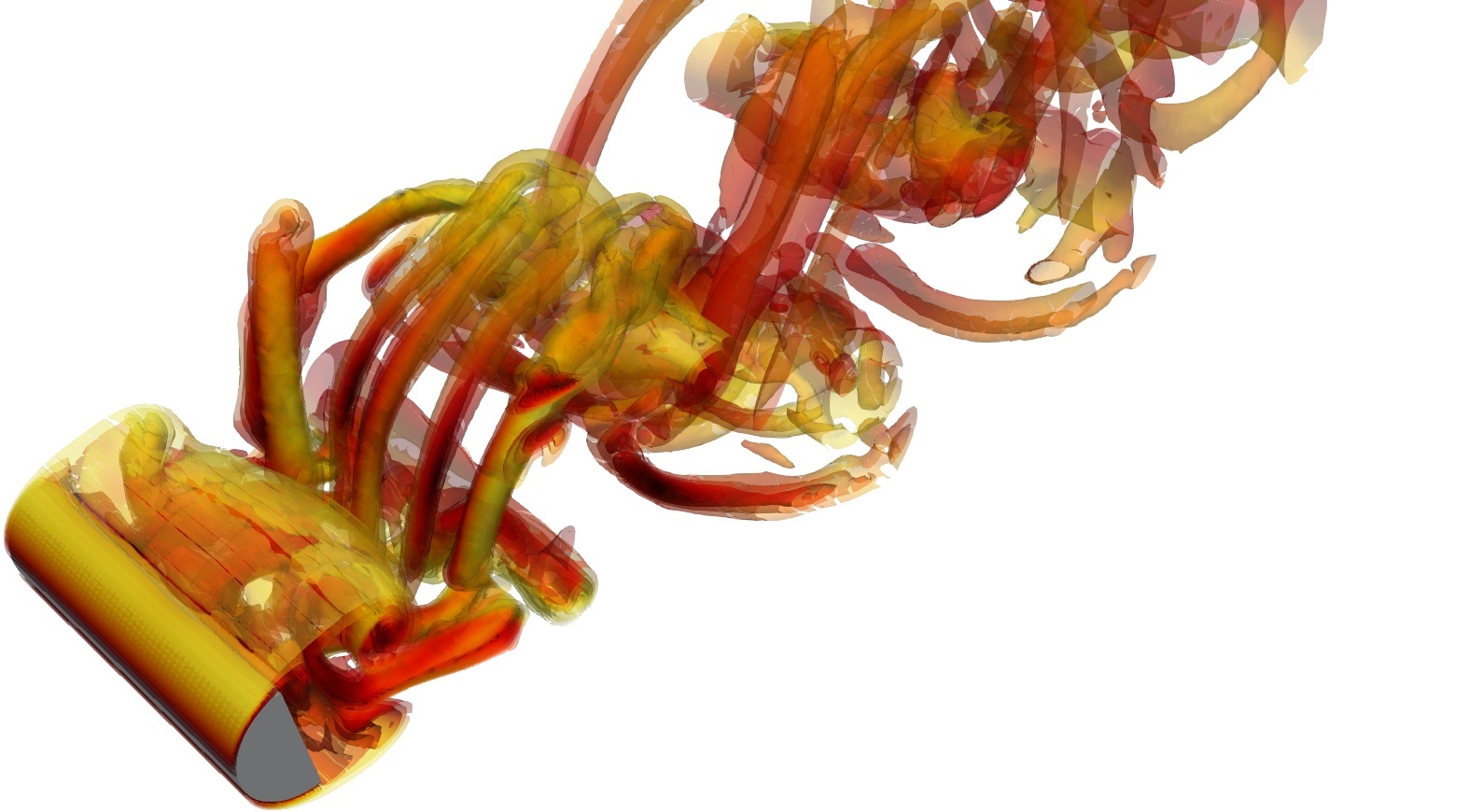}
    \hfill
    \includegraphics[width=0.32\linewidth]{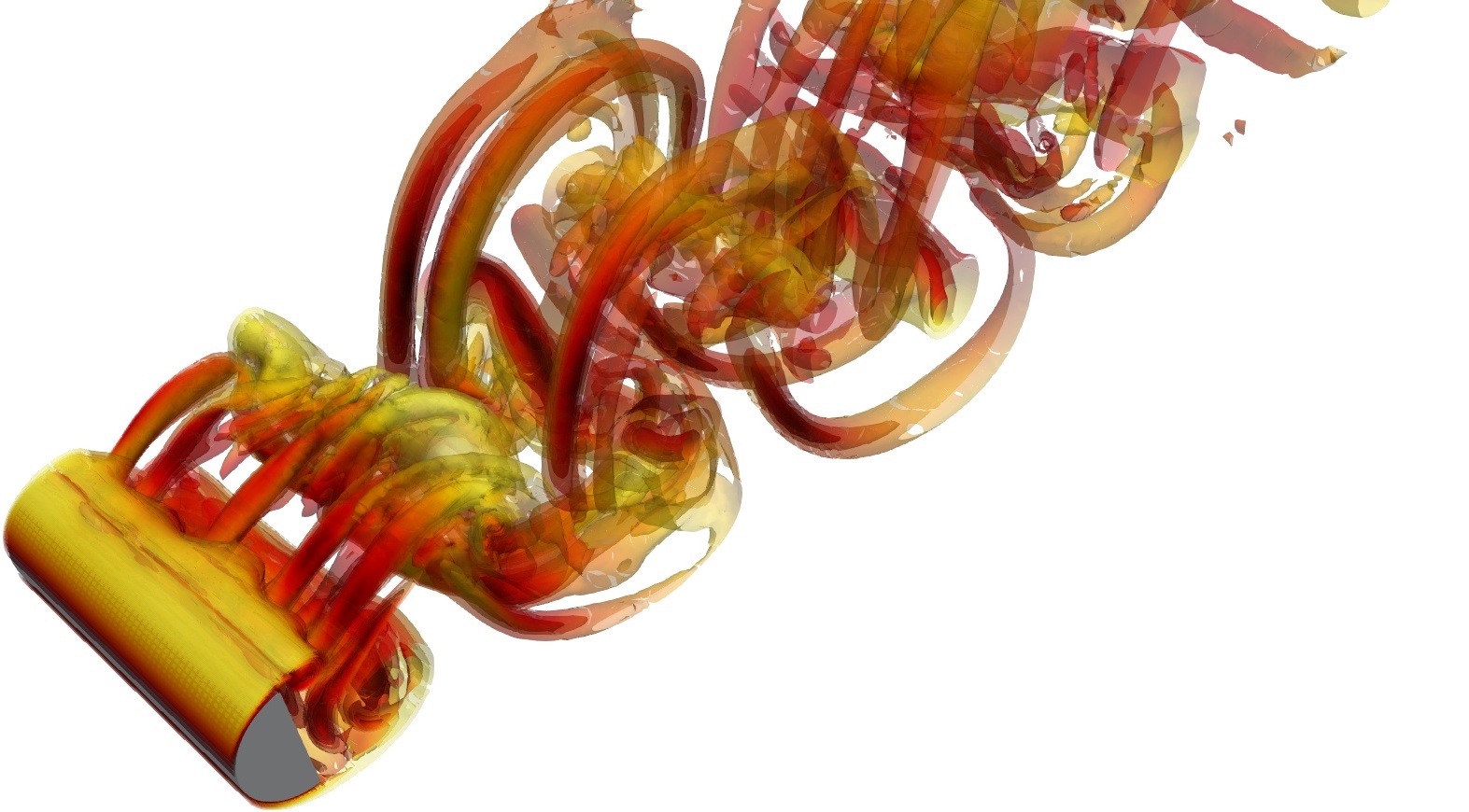}
			\caption{CFD solution $\stateGlobalVectorArg{n}$}\vspace{1em}
  \end{subfigure}
  \begin{subfigure}[b]{0.96 \linewidth}
  \centering
    \includegraphics[width=0.32\linewidth]{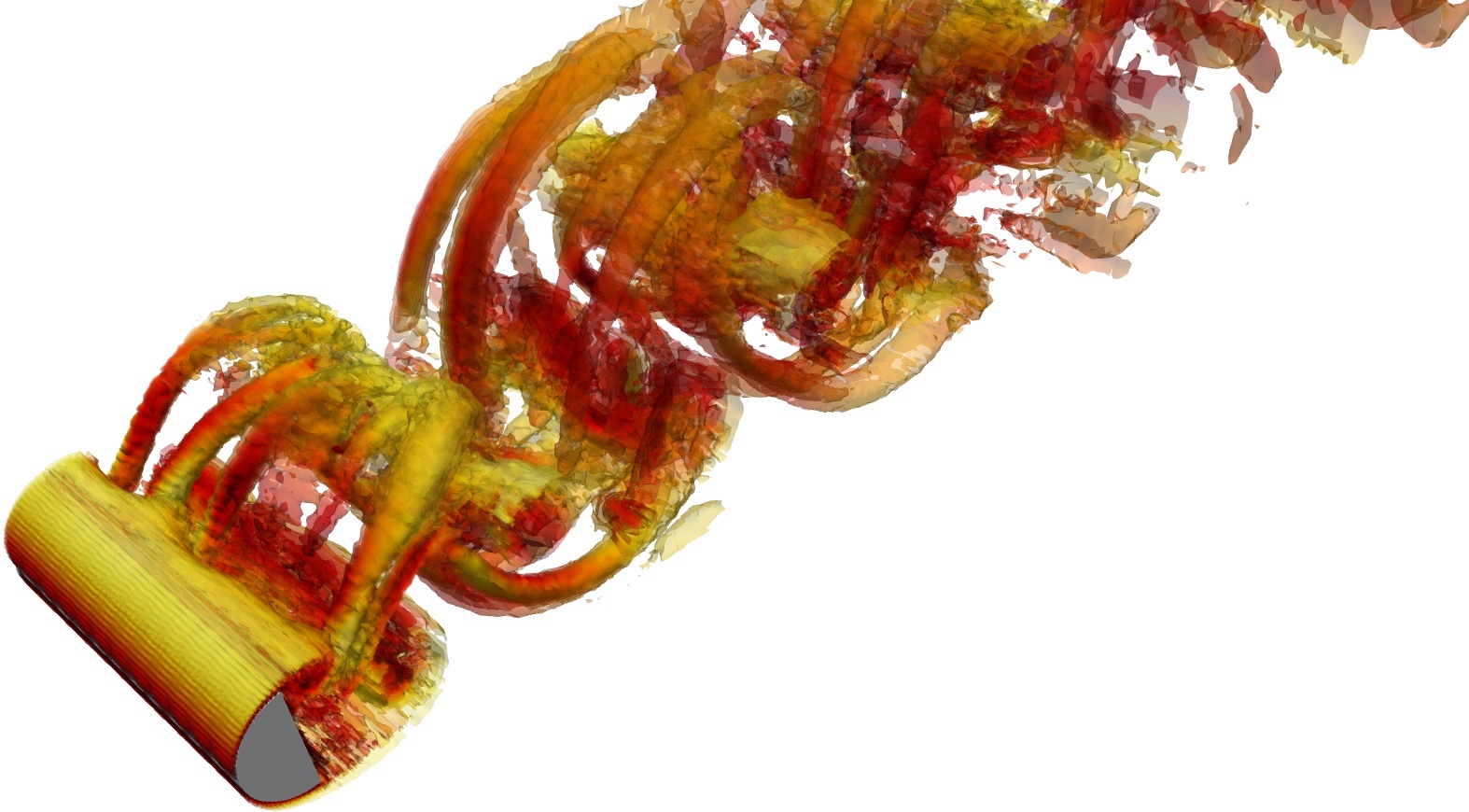}
  \hfill
    \includegraphics[width=0.32\linewidth]{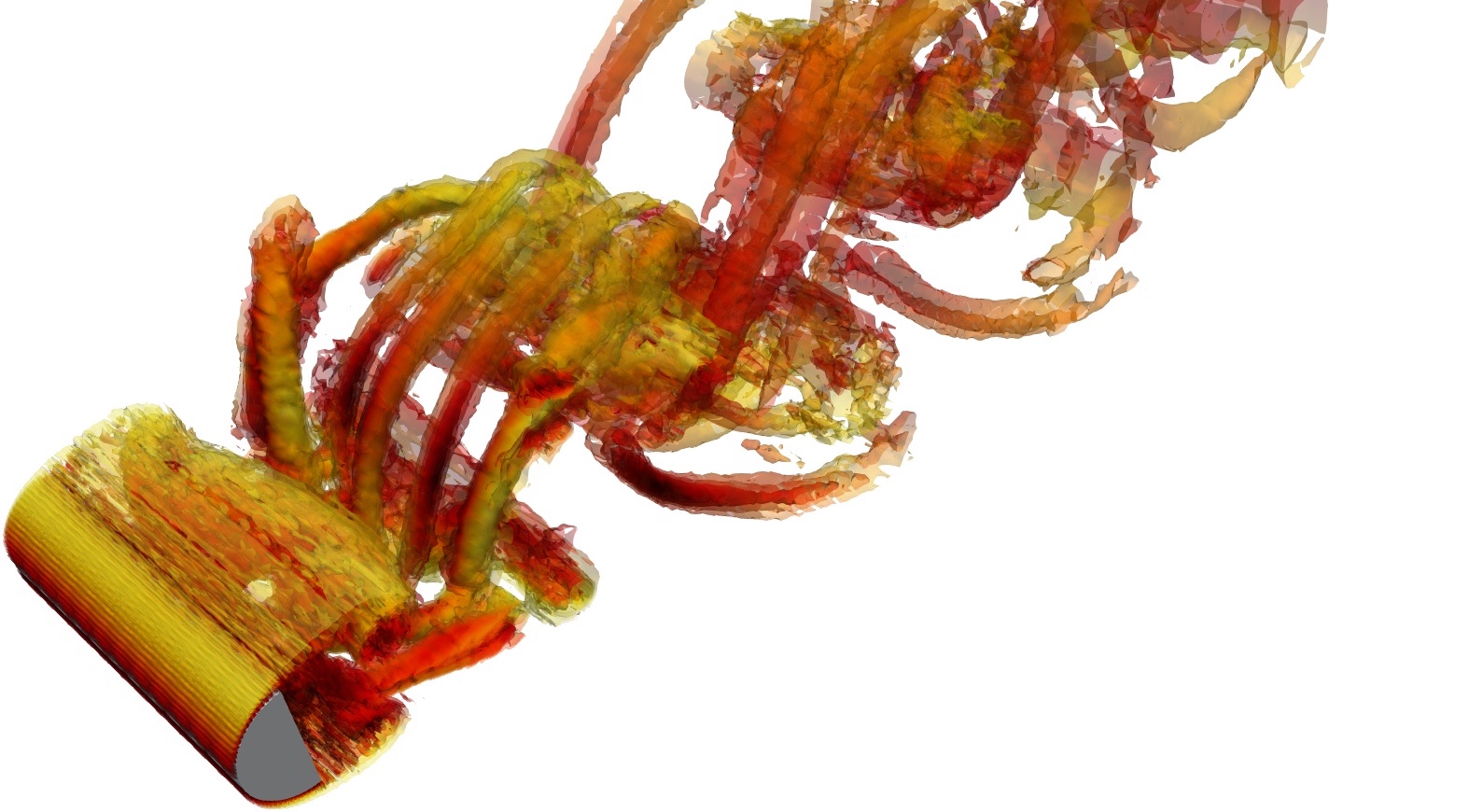}
    \hfill
    \includegraphics[width=0.32\linewidth]{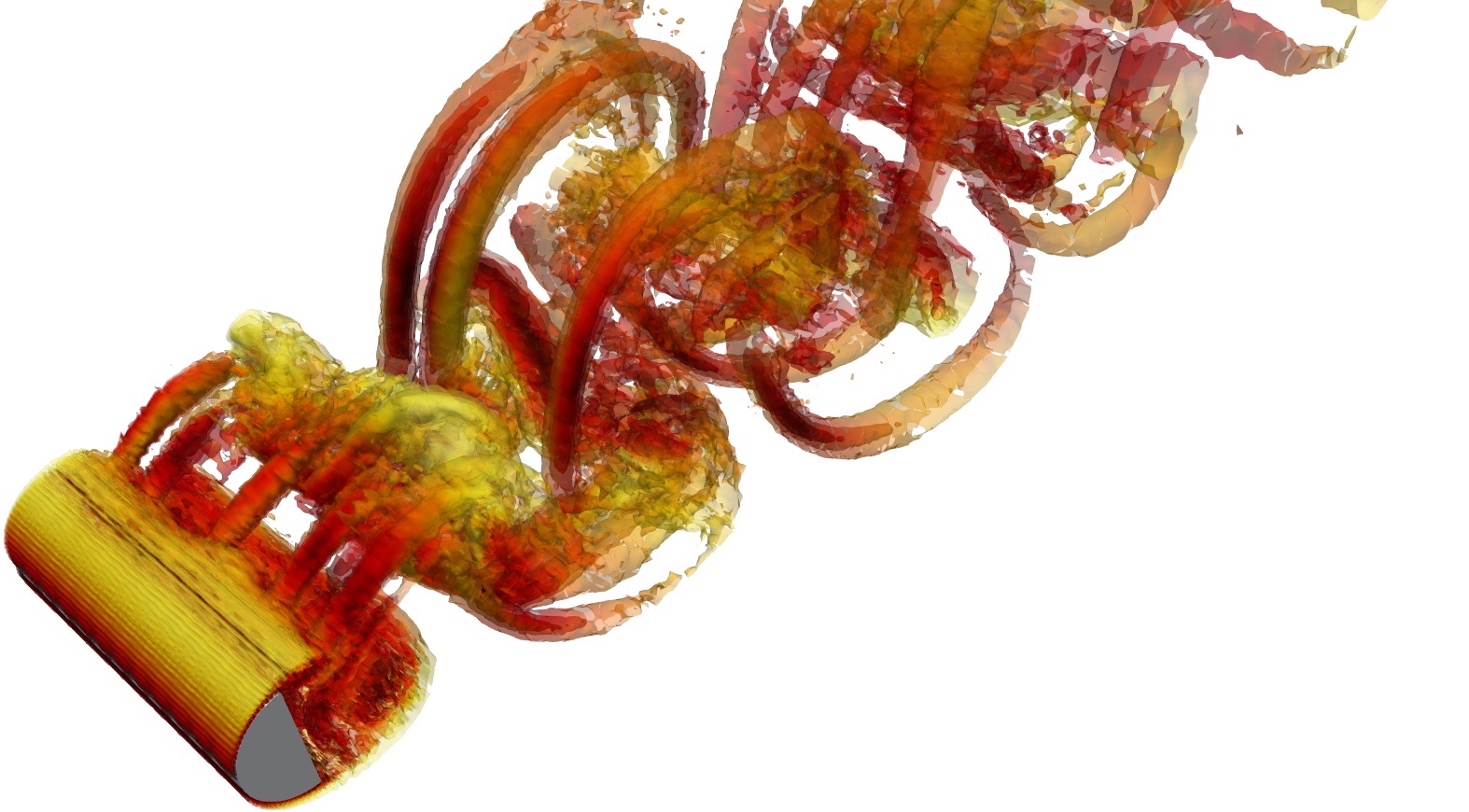}
			\caption{Autoencoder reconstruction $
		\decoderFunctionGlobal(\cdot)\circ\encoderFunctionGlobal(\stateGlobalVectorArg{n})$}\vspace{1em}
  \end{subfigure}
  \begin{subfigure}[b]{0.96 \linewidth}
  \centering
    \includegraphics[width=0.32\linewidth]{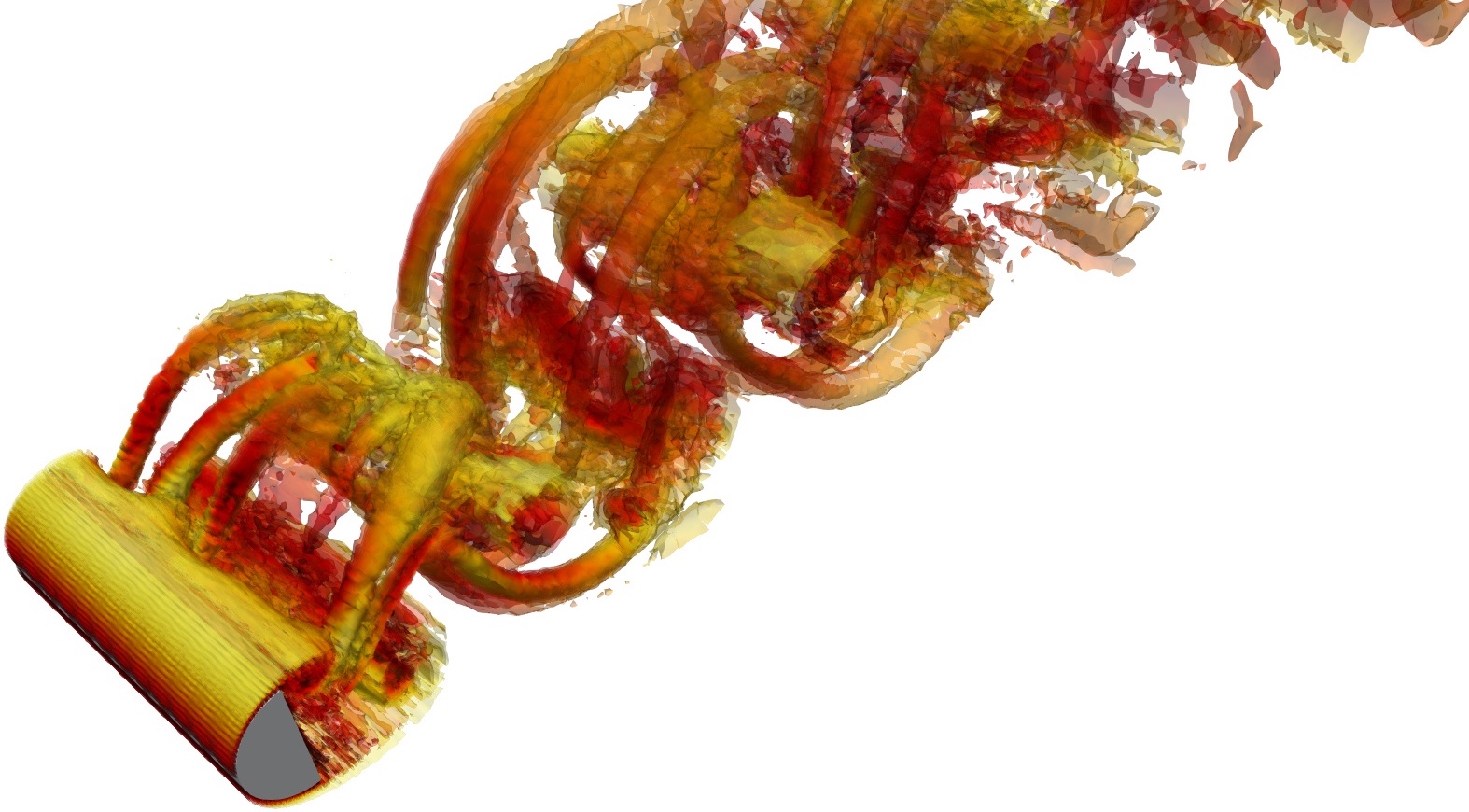}
  \hfill
    \includegraphics[width=0.32\linewidth]{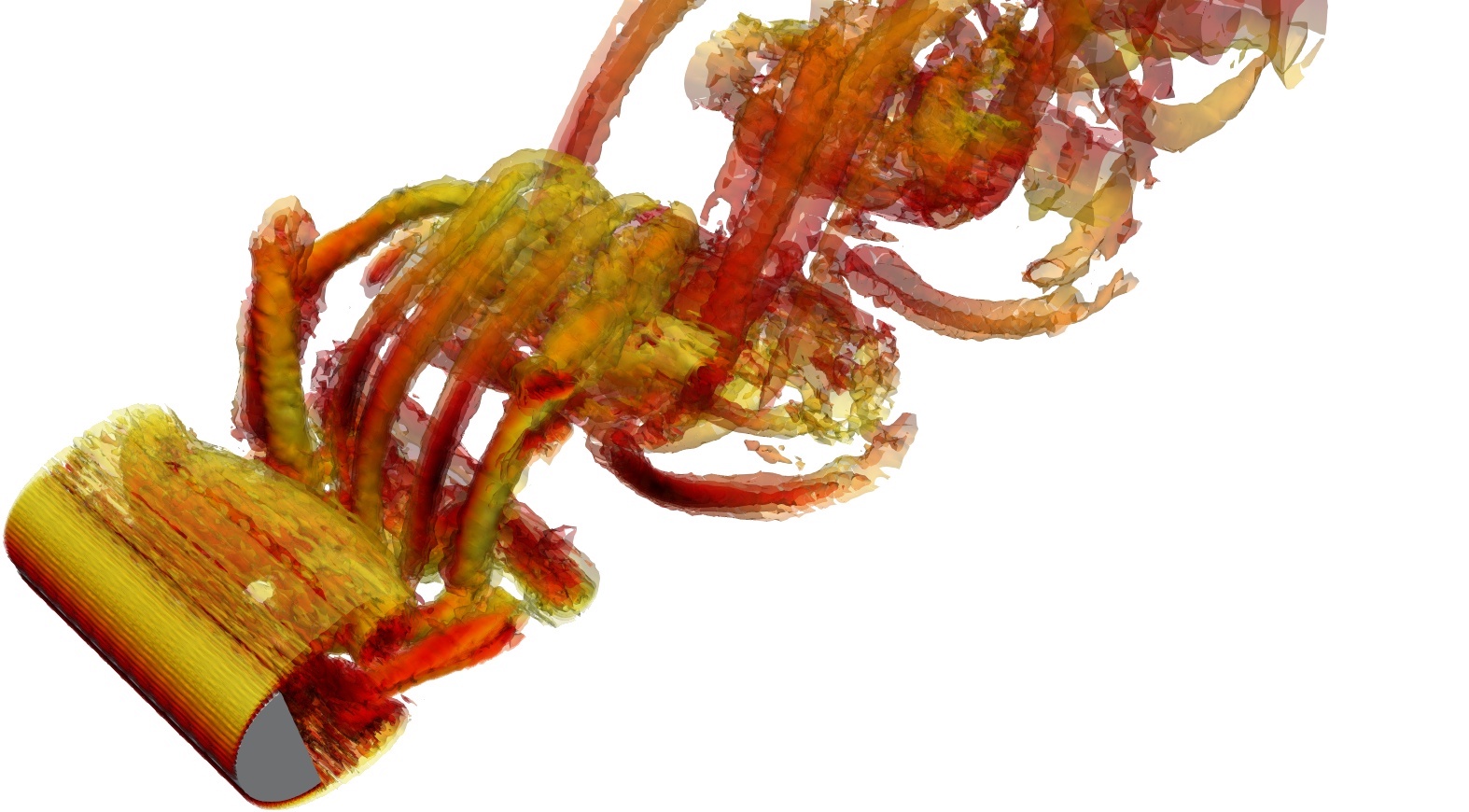}
    \hfill
    \includegraphics[width=0.32\linewidth]{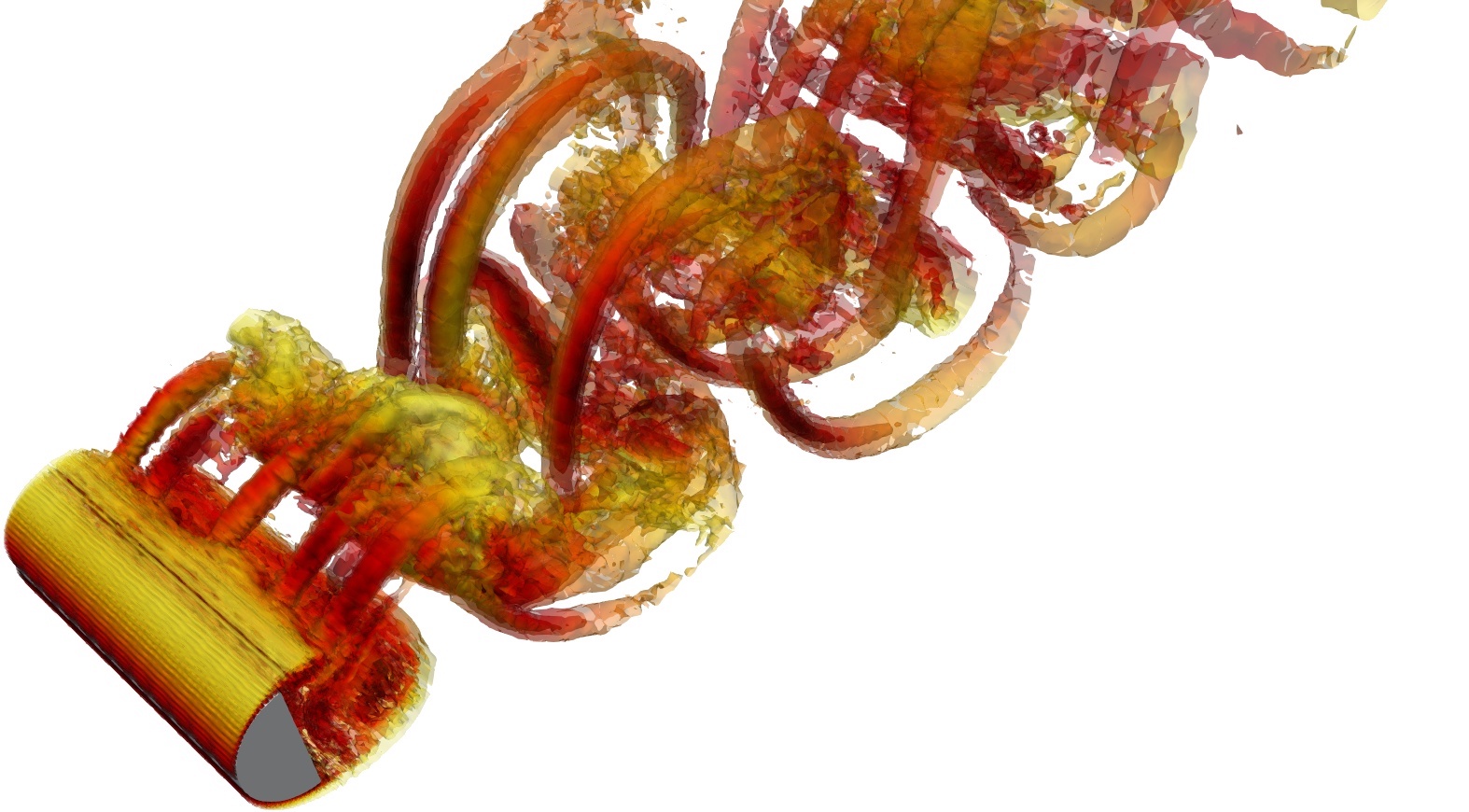}
			\caption{Approximated solution $\stateGlobalVectorApproxArg{n}$}\vspace{1em}
  \end{subfigure}

	\caption{Instantaneous iso-surfaces of Q-criterion colored by velocity
	magnitude at $n = 1\,000, 4\,000$ and \num{7\,000}.}
  \label{fig:reconstruction}
\end{figure}

\section{Conclusions}\label{sec:conclusions}

This article presented a novel methodology for recovering missing CFD data
given that the solution has been saved to disk at a relatively small number of
time steps.

The first stage of the methodology---hierarchical dimensionality
reduction---comprises a novel two-step technique tailored for high-order
discretizations. In the first step, we apply autoencoders for local compression; this
step computes a low-dimensional, nonlinear embedding of the degrees of freedom within
each element of the high-order discretization. In the numerical experiments,
we employed a convolutional neural network for this purpose. Results showed
that the autoencoder reduced the dimensionality of the element-local state
from $\nstateLocalVector=320$ to $\nstateLocalVectorRed = 24$ without
incurring significant errors (see Figures \ref{fig:lift_drag} and
\ref{fig:reconstruction}). In the second dimensionality-reduction step, we
apply principal component analysis to compress the global vector of encodings.
Results showed that this second step was able to reduce the number of global
degrees of freedom from $\nstateLocalVectorRed\nelements= 24\times 40\,584 \sim 10^6 $
to only 500, constituting a compression ratio of $26\,000:1$, while retaining very high levels of accuracy.

The second stage of the methodology---dynamics learning---applied regression
methods from machine learning to learn the discrete-time dynamics of the
low-dimensional state. We considered a wide range of regression methods for
this purpose, and found that support vector regression with a
radial-basis-function kernel (SVRrbf) and the vectorial kernel orthogonal
greedy algorithm (VKOGA) yielded the best performance (see Figure
\ref{fig:SimRandom8600}). Although the sparse identification of nonlinear
dynamics (SINDy) yielded low regression errors on an independent test set (see
Figure \ref{fig:trainingTestingError}), it did not produce an accurate
solution approximation due to its unboundedness; indeed, the resulting
trajectory was unstable (see Figure \ref{fig:SimRandom8600}). 

Ultimately, applying the proposed methodology with VKOGA and a low-dimensional
state dimension of 500 satisfied the objective of this work, as it enabled the
original CFD data to be reconstructed with extremely high levels of accuracy,
yielding low errors in the state, lift, and drag, as well as solution fields
that match well visually (see Figures \ref{fig:error}, \ref{fig:lift_drag},
and \ref{fig:reconstruction}). 

Future work involves introducing parameterization into the problem setting
such that the reconstruction task can be performed across multiple CFD
simulations characterized by different parameters, e.g., different boundary
conditions, geometric shapes, and operating conditions.

\section*{Acknowledgments}
K.~Carlberg's  and L.~Peng's research was sponsored by Sandia's Advanced
Simulation and Computing (ASC) Verification and Validation (V\&V) Project/Task
\#65755/002.01.19. F.~D.~Witherden's and A.~Jameson's research was supported
by the Air Force Office of Scientific Research via grant FA9550-14-1-0186.
J.~Morton's research was supported by the National Science Foundation Graduate
Research Fellowship Program under Grant No. DGE-114747.

\section*{References}
\bibliography{ref2}
\bibliographystyle{siam}

\end{document}